\tikzstyle{vecArrow} = [thick, decoration={markings,mark=at position
\tikzstyle{innerWhite} = [semithick, white,line width=1.4pt, shorten >= 4.5pt]
\newtheorem{definition}{Definition}
\newtheorem{proposition}{Proposition}
\newtheorem{lemma}{Lemma}
\newtheorem{theorem}{Theorem}
\newtheorem{corollary}[definition]{Corollary}
\newtheorem{conjecture}[definition]{Conjecture}
\newtheorem{remark}[definition]{Remark}
\newtheorem{example}{Example}
\newtheorem{question}[definition]{Question}
\def\bcj{\begin{conjecture}}
	\def\ecj{\end{conjecture}}
\def\bcr{\begin{corollary}}
	\def\ecr{\end{corollary}}
\def\bd{\begin{definition}}
	\def\ed{\end{definition}}
\def\bea{\begin{eqnarray}}
	\def\eea{\end{eqnarray}}
\def\bem{\begin{enumerate}}
	\def\eem{\end{enumerate}}
\def\bex{\begin{example}}
	\def\eex{\end{example}}
\def\bim{\begin{itemize}}
	\def\eim{\end{itemize}}
\def\bl{\begin{lemma}}
	\def\el{\end{lemma}}
\def\bma{\begin{bmatrix}}
	\def\ema{\end{bmatrix}}
\def\bpf{\begin{proof}}
	\def\epf{\end{proof}}
\def\bpp{\begin{proposition}}
	\def\epp{\end{proposition}}
\def\bqu{\begin{question}}
	\def\equ{\end{question}}
\def\br{\begin{remark}}
	\def\er{\end{remark}}
\def\bt{\begin{theorem}}
	\def\et{\end{theorem}}
\def\squareforqed{\hbox{\rlap{$\sqcap$}$\sqcup$}}
\def\qed{\ifmmode\squareforqed\else{\unskip\nobreak\hfil
		\penalty50\hskip1em\null\nobreak\hfil\squareforqed
		\parfillskip=0pt\finalhyphendemerits=0\endgraf}\fi}
\def\endenv{\ifmmode\;\else{\unskip\nobreak\hfil
		\penalty50\hskip1em\null\nobreak\hfil\;
		\parfillskip=0pt\finalhyphendemerits=0\endgraf}\fi}
\newenvironment{proof}{\noindent \textbf{{Proof.~} }}{\qed}
\def\Dbar{\leavevmode\lower.6ex\hbox to 0pt
	{\hskip-.23ex\accent"16\hss}D}
\def\url@leostyle{%
	\@ifundefined{selectfont}{\def\UrlFont{\sf}}{\def\UrlFont{\small\ttfamily}}}
\def\bcj{\begin{conjecture}}
	\def\ecj{\end{conjecture}}
\def\bcr{\begin{corollary}}
	\def\ecr{\end{corollary}}
\def\bd{\begin{definition}}
	\def\ed{\end{definition}}
\def\bea{\begin{eqnarray}}
	\def\eea{\end{eqnarray}}
\def\bem{\begin{enumerate}}
	\def\eem{\end{enumerate}}
\def\bex{\begin{example}}
	\def\eex{\end{example}}
\def\bim{\begin{itemize}}
	\def\eim{\end{itemize}}
\def\bl{\begin{lemma}}
	\def\el{\end{lemma}}
\def\bpf{\begin{proof}}
	\def\epf{\end{proof}}
\def\bpp{\begin{proposition}}
	\def\epp{\end{proposition}}
\def\bqu{\begin{question}}
	\def\equ{\end{question}}
\def\br{\begin{remark}}
	\def\er{\end{remark}}
\def\bt{\begin{theorem}}
	\def\et{\end{theorem}}
\def\btb{\begin{tabular}}
	\def\etb{\end{tabular}}
\newcommand{\nc}{\newcommand}
\nc{\bbA}{\mathbb{A}} \nc{\bbB}{\mathbb{B}} \nc{\bbC}{\mathbb{C}}
\nc{\bbD}{\mathbb{D}} \nc{\bbE}{\mathbb{E}} \nc{\bbF}{\mathbb{F}}
\nc{\bbG}{\mathbb{G}} \nc{\bbH}{\mathbb{H}} \nc{\bbI}{\mathbb{I}}
\nc{\bbJ}{\mathbb{J}} \nc{\bbK}{\mathbb{K}} \nc{\bbL}{\mathbb{L}}
\nc{\bbM}{\mathbb{M}} \nc{\bbN}{\mathbb{N}} \nc{\bbO}{\mathbb{O}}
\nc{\bbP}{\mathbb{P}} \nc{\bbQ}{\mathbb{Q}} \nc{\bbR}{\mathbb{R}}
\nc{\bbS}{\mathbb{S}} \nc{\bbT}{\mathbb{T}} \nc{\bbU}{\mathbb{U}}
\nc{\bbV}{\mathbb{V}} \nc{\bbW}{\mathbb{W}} \nc{\bbX}{\mathbb{X}}
\nc{\bbZ}{\mathbb{Z}}
\nc{\bA}{{\bf A}} \nc{\bB}{{\bf B}} \nc{\bC}{{\bf C}}
\nc{\bD}{{\bf D}} \nc{\bE}{{\bf E}} \nc{\bF}{{\bf F}}
\nc{\bG}{{\bf G}} \nc{\bH}{{\bf H}} \nc{\bI}{{\bf I}}
\nc{\bJ}{{\bf J}} \nc{\bK}{{\bf K}} \nc{\bL}{{\bf L}}
\nc{\bM}{{\bf M}} \nc{\bN}{{\bf N}} \nc{\bO}{{\bf O}}
\nc{\bP}{{\bf P}} \nc{\bQ}{{\bf Q}} \nc{\bR}{{\bf R}}
\nc{\bS}{{\bf S}} \nc{\bT}{{\bf T}} \nc{\bU}{{\bf U}}
\nc{\bV}{{\bf V}} \nc{\bW}{{\bf W}} \nc{\bX}{{\bf X}}
\nc{\bZ}{{\bf Z}} \nc{\bm}{{\bf m}} \nc{\bv}{{\bf v}}
\nc{\ba}{{\bf a}} \nc{\be}{{\bf e}} \nc{\bu}{{\bf u}}
\nc{\brr}{{\bf r}}
\nc{\cA}{{\cal A}} \nc{\cB}{{\cal B}} \nc{\cC}{{\cal C}}
\nc{\cD}{{\cal D}} \nc{\cE}{{\cal E}} \nc{\cF}{{\cal F}}
\nc{\cG}{{\cal G}} \nc{\cH}{{\cal H}} \nc{\cI}{{\cal I}}
\nc{\cJ}{{\cal J}} \nc{\cK}{{\cal K}} \nc{\cL}{{\cal L}}
\nc{\cM}{{\cal M}} \nc{\cN}{{\cal N}} \nc{\cO}{{\cal O}}
\nc{\cP}{{\cal P}} \nc{\cQ}{{\cal Q}} \nc{\cR}{{\cal R}}
\nc{\cS}{{\cal S}} \nc{\cT}{{\cal T}} \nc{\cU}{{\cal U}}
\nc{\cV}{{\cal V}} \nc{\cW}{{\cal W}} \nc{\cX}{{\cal X}}
\nc{\cZ}{{\cal Z}}
\nc{\hA}{{\hat{A}}} \nc{\hB}{{\hat{B}}} \nc{\hC}{{\hat{C}}}
\nc{\hD}{{\hat{D}}} \nc{\hE}{{\hat{E}}} \nc{\hF}{{\hat{F}}}
\nc{\hG}{{\hat{G}}} \nc{\hH}{{\hat{H}}} \nc{\hI}{{\hat{I}}}
\nc{\hJ}{{\hat{J}}} \nc{\hK}{{\hat{K}}} \nc{\hL}{{\hat{L}}}
\nc{\hM}{{\hat{M}}} \nc{\hN}{{\hat{N}}} \nc{\hO}{{\hat{O}}}
\nc{\hP}{{\hat{P}}} \nc{\hR}{{\hat{R}}} \nc{\hS}{{\hat{S}}}
\nc{\hT}{{\hat{T}}} \nc{\hU}{{\hat{U}}} \nc{\hV}{{\hat{V}}}
\nc{\hW}{{\hat{W}}} \nc{\hX}{{\hat{X}}} \nc{\hZ}{{\hat{Z}}}
\nc{\hn}{{\hat{n}}}
\newcommand{\ket}[1]{|#1\rangle}
\newcommand{\ketbra}[2]{|#1\rangle\!\langle#2|}
\def\Dbar{\leavevmode\lower.6ex\hbox to 0pt
	{\hskip-.23ex\accent"16\hss}D}
\begin{document}
	
	\title{Strong quantum nonlocality from hypercubes}
	
	
	\author{Fei Shi}
	\thanks{F. Shi and M.-S. Li contributed equally to this work.}
	\affiliation{School of Cyber Security,
		University of Science and Technology of China, Hefei, 230026, People's Republic of China}
	
	\author{Mao-Sheng Li}
	\thanks{F. Shi and M.-S. Li contributed equally to this work.}
	\affiliation{Department of Physics, Southern University of Science and Technology, Shenzhen 518055, China}
	\affiliation{Department of Physics, University of Science and Technology of China, Hefei 230026, China}
		
	\author{Mengyao Hu}
	\affiliation{School of Mathematical Sciences, Beihang University, Beijing 100191, China}
	
	\author{Lin Chen}
	\affiliation{LMIB and School of Mathematical Sciences, Beihang University, Beijing 100191, China}
	\affiliation{International Research Institute for Multidisciplinary Science, Beihang University, Beijing 100191, China}
	
	\author{Man-Hong Yung}
	\affiliation{Department of Physics, Southern University of Science and Technology, Shenzhen 518055, China}
	\affiliation{Institute for Quantum Science and Engineering, and Department of Physics, Southern University of Science and Technology, Shenzhen, 518055, China}

	\author{Yan-Ling Wang}
\email[]{Corresponding authors: drzhangx@ustc.edu.cn, wangylmath@yahoo.com}
\affiliation{School of Computer Science and Techonology, Dongguan University of Technology, Dongguan, 523808, China}	
	
	\author{Xiande Zhang}
	\email[]{Corresponding authors: drzhangx@ustc.edu.cn, wangylmath@yahoo.com}
	\affiliation{School of Mathematical Sciences,
		University of Science and Technology of China, Hefei, 230026, People's Republic of China}

	\begin{abstract}
	A set of multipartite orthogonal product states is strongly nonlocal if it is locally irreducible in every bipartition.  Most known constructions of strongly nonlocal orthogonal product set (OPS) are limited to tripartite systems, and they are lack of intuitive structures. In this work,  based on the decomposition for the outermost layer of an $n$-dimensional hypercube for $n=3,4,5$,  we successfully  construct strongly nonlocal  OPSs in any possible  three, four and five-partite systems, which answers an open question given by Halder \emph{et al.} [\href{https://journals.aps.org/prl/abstract/10.1103/PhysRevLett.122.040403}{Phys. Rev. Lett. \textbf{122}, 040403 (2019)}] and Yuan \emph{et al.} [\href{https://journals.aps.org/pra/abstract/10.1103/PhysRevA.102.042228}{Phys. Rev. A \textbf{102}, 042228 (2020)}] for any possible three, four and five-partite systems.  Our results build the connection between hypercubes and strongly nonlocal OPSs, and exhibit the phenomenon of strong quantum nonlocality without entanglement in multipartite systems. 	
	\end{abstract}

\maketitle
	
\vspace{-0.5cm}
~~~~~~~~~~\indent{\textbf{Keywords}}: strong quantum nonlocality, orthogonal product sets, hypercubes

\section{Introduction}

A set of multipartite orthogonal quantum states  is locally indistinguishable if it is not possible to optimally distinguish the states
by any sequence of local operations and classical communications (LOCC). When the classical
message is encoded in such multipartite states, it cannot be completely retrieved under LOCC. It requires global operations to retrieve the message.  Subsequently, local indistinguishability can be used for quantum data hiding \cite{terhal2001hiding,divincenzo2002quantum,eggeling2002hiding,Matthews2009Distinguishability} and quantum secret sharing \cite{Markham2008Graph,Hillery,Rahaman}.    Bennett \emph{et al.} provided a locally indistinguishable  orthogonal product basis in  $3\otimes 3$ \cite{bennett1999quantum},  which shows the phenomenon of quantum nonlocality without entanglement. Later, the locally indistinguishable orthogonal product states and orthogonal entangled  states have attracted much attention \cite{1,2,3,4,5,6,7,8,9,10,11,12,13,14,15,16,17}.

	Recently, Halder \emph{et al.} introduced the concept of locally irreducible set \cite{Halder2019Strong}. An OPS is locally irreducible  means that it is not possible to eliminate one or more states from the set by orthogonality-preserving local measurements. Local irreducibility ensures local indistinguishability, while the converse is not true usually.  An effective way to prove that an OPS is locally irreducible is to show that only trivial orthogonality-preserving local measurement can be performed to this set.    An OPS is strongly nonlocal if it is locally irreducible in every bipartition.  They also showed two strongly nonlocal orthogonal product bases in $3\otimes 3\otimes 3$ and $4\otimes 4\otimes 4$, respectively, which shows the phenomenon of strong quantum nonlocality without entanglement. After that,  based on the local irreducibility in some multipartitions, Zhang \emph{et al.} generalized the strong quantum nonlocality and gave some explicit examples  \cite{zhangstrong2019}. Yuan \emph{et al.} showed a strongly nonlocal OPS in $d\otimes d\otimes d$, $d\otimes d\otimes (d+1)$, $3\otimes 3\otimes 3\otimes 3$ and $4\otimes 4\otimes 4\otimes 4$  for $d\geq 3$ \cite{yuan2020strong}.   Further, Shi \emph{et al.} constructed a strongly nonlocal orthogonal entangled basis which is not a genuinely entangled basis in $d\otimes d\otimes d$ for $d\geq 3$  \cite{2020Strong}, and they also showed that a strongly nonlocal unextendible product basis(UPB) in $d\otimes d\otimes d$ exists for $d\geq 3$  \cite{shi2021}.  Recently, Wang \emph{et al.} showed a genuinely orthogonal entangled set that is strongly nonlocal in $d\otimes d\otimes d$ for $d\geq 3$ \cite{li2}. The authors in Refs.~\cite{Halder2019Strong,yuan2020strong} also proposed an open question. Whether one can construct strongly nonlocal OPSs in multipartite systems? In this paper, we shall solve this question for any possible three, four and five-partite systems.

	Most of the known constructions of strongly nonlocal OPSs are lack of intuitive structures, and it is not easy to generalize them to multipartite systems.  In this work,  based on the decomposition for the outermost layer of an  $n$-dimensional hypercube for $n=3,4,5$,  we successfully construct  strongly nonlocal OPSs with ``well structure"  of size $d_Ad_Bd_C-(d_A-2)(d_B-2)(d_C-2)$ in $d_A\otimes d_B\otimes d_C$, size $d_Ad_Bd_Cd_D-(d_A-2)(d_B-2)(d_C-2)(d_D-2)$, and size $d_Ad_Bd_Cd_Dd_E-(d_A-2)(d_B-2)(d_C-2)(d_D-2)(d_E-2)$ in $d_A\otimes d_B\otimes d_C\otimes d_D\otimes d_E$ for $d_A,d_B,d_C,d_D,d_E\geq 3$, respectively. Our results answers an open question given in \cite{Halder2019Strong,yuan2020strong} for any possible three, four and five-partite systems.

 \section{Preliminaries}
	In this paper, we only consider pure states and
	positive operator-valued measurement (POVM), and we do not normalize states and operators for simplicity. A local measurement performed to distinguish a set of multipartite orthogonal states is called an \emph{orthogonality-preserving local measurements}, if the postmeasurement states remain orthogonal. An OPS in $d_1\otimes d_2\otimes \cdots\otimes d_n$ is \emph{locally irreducible} if it is not possible to eliminate one or more states from the set by orthogonality-preserving local measurements \cite{Halder2019Strong}. Further, in  $d_1\otimes d_2\otimes\cdots \otimes d_n$, $n\geq 3$ and $d_i\geq 3$, an OPS is \emph{strongly nonlocal} if it is locally irreducible in every bipartition, which shows the phenomenon of strong quantum nonlocality without entanglement \cite{Halder2019Strong}.
	
    An OPS is said to be of  \emph{the strongest nonlocality} if only trivial  orthogonality-preserving POVM can be performed on it for each bipartition of the subsystems \cite{shi2021}. A measurement is \emph{trivial} if all the POVM elements are proportional to the identity operator. By definition, an OPS that is of the strongest nonlocality must be strongly nonlocal. However, the converse is not true usually. For example, a strongly nonlocal OPS in $3\otimes 3\otimes 3$ can be viewed as a strongly nonlocal OPS in $4\otimes 4\otimes 4$. Then Alice can perform a nontrivial orthogonality-preserving POVM $\{\ketbra{3}{3},\bbI-\ketbra{3}{3}\}$ to this OPS.  There exists an efficient way to check whether an OPS is of the strongest nonlocality \cite{shi2021}. Consider an OPS $\{\ket{\psi}\}\subset \otimes_{i=1}^n\cH_{A_i}$. Let $B_1=\{A_2A_3\ldots A_n\}$, $B_2=\{A_3\ldots A_nA_1\}, B_3=\{A_4\ldots A_nA_1A_2\}, \ldots, B_n=\{A_1\ldots A_{n-1}\}$. If the party $B_i$  can only perform a trivial orthogonality-preserving POVM for any $1\leq i\leq n$,  then the OPS  $\{\ket{\psi}\}$ is of the strongest nonlocality. We will use this method throughout the paper.

	For any positive integer $n\geq 2$, we denote $\bbZ_{n}$ as the set $\{0,1,\cdots,n-1\}$, and denote $w_n=e^{\frac{2\pi\sqrt{-1}}{n}}$.	 We  assume that $\{|0\rangle, |1\rangle, \cdots,|n-1\rangle\}$ is the computational basis of the $n$ dimensional Hilbert space $\cH_n$. For any operator $M$ acting on $\cH_n$, we denote the matrix $M$ as the matrix representation of the operator $M$ under the computational basis. In general, we do not distinguish the operator $M$ and its matrix  representation $M$. Given any $n\times n$ matrix $E:=\sum_{i=0}^{n-1}\sum_{j=0}^{n-1} a_{i,j}|i\rangle\langle j|$, we define
	\begin{equation*}
		{}_\mathcal{S}E_{\mathcal{T}}:=\sum_{|s\rangle \in \mathcal{S}}\sum_{|t\rangle \in \mathcal{T}}a_{s,t} |s\rangle\langle t|,
	\end{equation*}
	where $\mathcal{S},\mathcal{T}\subseteq \{|0\rangle,|1\rangle,\cdots, |n-1\rangle\}$. It means that ${}_\mathcal{S}E_{\mathcal{T}}$ is a submatrix of $E$ with  row coordinates $\mathcal{S}$ and column coordinates $\mathcal{T}$.  In the case $\cS=\cT$, we denote
	\begin{equation*}
	  E_{\cS}:={}_{\cS}E_{\cS}
	\end{equation*}
	for simplicity. Finally, we introduce two basic lemmas which are from \cite{shi2021}, and they are useful for showing that an OPS is of the strongest nonlocality. Assume a set $\cS\subseteq\{|0\rangle,|1\rangle,\cdots, |n-1\rangle\}$. An orthogonal set $\{\ket{\psi_i}\}_{i\in\bbZ_s}$ is \emph{spanned} by $\cS$, if for any $i\in\bbZ_s$,  $\ket{\psi_i}$ is a linear combination of the states from $\cS$.

	\begin{lemma}[Block Zeros Lemma \cite{shi2021}]\label{lem:zero}
		Let  an  $n\times n$ matrix $E=(a_{i,j})_{i,j\in\bbZ_n}$ be the matrix representation of an operator  $E=M^{\dagger}M$  under the basis  $\cB:=\{\ket{0},\ket{1},\ldots,\ket{n-1}\}$. Given two nonempty disjoint subsets $\cS$ and $\cT$ of $\cB$, assume  that  $\{\ket{\psi_i}\}_{i\in\bbZ_s}$, $\{\ket{\phi_j}\}_{j\in\bbZ_t}$ are two orthogonal sets  spanned by $\cS$ and $\cT$ respectively, where $s=|\cS|,$ and $t=|\cT|.$  If  $\langle \psi_i| E| \phi_j\rangle =0$
		for any $i\in \mathbb{Z}_s,j\in\mathbb{Z}_t$, then   ${}_\mathcal{S}E_{\mathcal{T}}=\mathbf{0}$  and  ${}_\mathcal{T}E_{\mathcal{S}}=\mathbf{0}$.
	\end{lemma}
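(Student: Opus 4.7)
The plan is to exploit two structural facts about $E=M^{\dagger}M$: it is Hermitian, and for any two vectors $|\a\rangle,|\b\rangle$ one has $\langle \a|E|\b\rangle=\langle M\a\,|\,M\b\rangle$. So the hypothesis $\langle\psi_i|E|\phi_j\rangle=0$ is saying that $M(\mathrm{span}\{|\psi_i\rangle\})\perp M(\mathrm{span}\{|\phi_j\rangle\})$, and I will lift this from the particular orthogonal sets $\{|\psi_i\rangle\},\{|\phi_j\rangle\}$ to the full subspaces $V_\cS:=\mathrm{span}\,\cS$ and $V_\cT:=\mathrm{span}\,\cT$.

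The key observation is that $\cS$ and $\cT$ are subsets of the orthonormal basis $\cB$, so $\dim V_\cS=|\cS|=s$ and $\dim V_\cT=|\cT|=t$. The set $\{|\psi_i\rangle\}_{i\in\bbZ_s}$ consists of $s$ mutually orthogonal (and, by the standard convention, nonzero) vectors lying inside $V_\cS$, hence forms a basis of $V_\cS$; likewise $\{|\phi_j\rangle\}_{j\in\bbZ_t}$ is a basis of $V_\cT$. Expanding arbitrary $|\a\rangle\in V_\cS$ and $|\b\rangle\in V_\cT$ in these bases and applying bilinearity of $\langle\cdot|E|\cdot\rangle$ promotes the hypothesis to
\begin{equation*}
\langle\a|E|\b\rangle=0\qquad\text{for every }|\a\rangle\in V_\cS,\;|\b\rangle\in V_\cT.
\end{equation*}

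Specializing to basis vectors $|\a\rangle=|s\rangle\in\cS$ and $|\b\rangle=|t\rangle\in\cT$ gives $a_{s,t}=\langle s|E|t\rangle=0$ for every such pair, which is exactly ${}_\cS E_\cT=\mathbf{0}$. The companion statement ${}_\cT E_\cS=\mathbf{0}$ is then immediate from Hermiticity of $E=M^{\dagger}M$, since $\langle t|E|s\rangle=\overline{\langle s|E|t\rangle}=0$. There is no real obstacle here; the only point that warrants mention is the tacit assumption that the orthogonal sets $\{|\psi_i\rangle\},\{|\phi_j\rangle\}$ contain no zero vector (so that counting dimensions forces them to be bases of $V_\cS,V_\cT$), which I take to be built into the meaning of ``orthogonal set'' used in the paper.
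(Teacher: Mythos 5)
Your proof is correct. Note that this paper does not actually prove the Block Zeros Lemma---it is quoted from Ref.~\cite{shi2021} with the proof omitted---but your argument (a dimension count showing that the $s$ mutually orthogonal nonzero vectors $\{\ket{\psi_i}\}$ form a basis of the $s$-dimensional space $\mathrm{span}\,\cS$, and likewise for $\cT$; sesquilinearity to extend $\langle\cdot|E|\cdot\rangle=0$ to all of $\mathrm{span}\,\cS\times\mathrm{span}\,\cT$; then specialization to computational basis vectors plus Hermiticity of $E=M^{\dagger}M$) is the standard route, and your closing remark that ``orthogonal set'' must tacitly exclude the zero vector is precisely the convention the dimension count relies on.
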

	
	\begin{lemma}[Block Trivial  Lemma \cite{shi2021}]\label{lem:trivial}
		Let  an  $n\times n$ matrix $E=(a_{i,j})_{i,j\in\bbZ_n}$ be the matrix representation of an operator  $E=M^{\dagger}M$  under the basis  $\cB:=\{\ket{0},\ket{1},\ldots,\ket{n-1}\}$. Given a nonempty  subset $\cS:=\{\ket{u_0},\ket{u_1},\ldots,\ket{u_{s-1}}\}$  of $\cB$, let $\{\ket{\psi_j} \}_{j=0}^{s-1}$ be an orthogonal  set spanned by $\cS$.     Assume that $\langle \psi_i|E |\psi_j\rangle=0$ for any $i\neq j\in \mathbb{Z}_s$.  If there exists a state $|u_t\rangle \in\cS$,  such that $ {}_{\{|u_t\rangle\}}E_{\cS\setminus \{|u_t\rangle\}}=\mathbf{0}$  and $\langle u_t|\psi_j\rangle \neq 0$  for any $j\in \mathbb{Z}_s$, then  $E_{\cS}\propto \mathbb{I}_{\cS}.$
	\end{lemma}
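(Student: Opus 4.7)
The plan is to reduce the claim to a simple statement about the spectrum of the Hermitian operator $E_\cS$ acting on $\mathrm{span}(\cS)$, and then exploit the zero row at $|u_t\rangle$ together with the nonvanishing overlaps $\langle u_t|\psi_j\rangle$ to force all eigenvalues to coincide. Throughout I identify $E_\cS$ with its action on $\mathrm{span}(\cS)$; note that $E=M^\dagger M$ is Hermitian, so $E_\cS$ is Hermitian too.

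First, I would observe that since the $|\psi_j\rangle$ lie in $\mathrm{span}(\cS)$, the matrix elements $\langle\psi_i|E|\psi_j\rangle$ coincide with $\langle\psi_i|E_\cS|\psi_j\rangle$; therefore the hypothesis $\langle\psi_i|E|\psi_j\rangle=0$ for $i\neq j$ means that $E_\cS$ is diagonal in the mutually orthogonal basis $\{|\psi_j\rangle\}_{j=0}^{s-1}$ of $\mathrm{span}(\cS)$. Hence $E_\cS|\psi_j\rangle=\lambda_j|\psi_j\rangle$ for some real scalars $\lambda_j=\langle\psi_j|E|\psi_j\rangle/\langle\psi_j|\psi_j\rangle$. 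At this point the whole problem becomes: show that all $\lambda_j$ are equal.

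Next I would evaluate $\langle u_t|E_\cS|\psi_j\rangle$ in two ways. On the one hand, expanding $|\psi_j\rangle$ in the $\cS$ basis gives
\begin{equation*}
\langle u_t|E_\cS|\psi_j\rangle=\sum_{|u_k\rangle\in\cS}\langle u_t|E_\cS|u_k\rangle\,\langle u_k|\psi_j\rangle=\langle u_t|E_\cS|u_t\rangle\,\langle u_t|\psi_j\rangle,
\end{equation*}
because the hypothesis ${}_{\{|u_t\rangle\}}E_{\cS\setminus\{|u_t\rangle\}}=\mathbf{0}$ kills every term with $k\neq t$. On the other hand, using the eigenrelation from the previous paragraph,
\begin{equation*}
\langle u_t|E_\cS|\psi_j\rangle=\lambda_j\,\langle u_t|\psi_j\rangle.
\end{equation*}
Comparing and invoking the assumption $\langle u_t|\psi_j\rangle\neq 0$ lets me cancel, yielding $\lambda_j=\langle u_t|E_\cS|u_t\rangle$ for every $j\in\bbZ_s$. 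Calling this common value $\lambda$, the spectral decomposition $E_\cS=\sum_j\lambda_j\,|\psi_j\rangle\langle\psi_j|/\langle\psi_j|\psi_j\rangle=\lambda\,\mathbb{I}_\cS$ completes the argument.

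There is no serious obstacle here; the proof is a short linear-algebra manipulation. The only point to handle carefully is the identification of $E$-matrix elements with $E_\cS$-matrix elements when the relevant vectors live in $\mathrm{span}(\cS)$, and the fact that the zero-row hypothesis is exactly what is needed to make the two expressions for $\langle u_t|E_\cS|\psi_j\rangle$ collapse to a single unknown scalar independent of $j$. The nonzero-overlap hypothesis then plays the role of a cancellation law, converting ``row-at-$u_t$ is almost zero'' into ``all eigenvalues agree.''
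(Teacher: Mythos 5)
Your proof is correct. Note that the paper does not prove this lemma itself --- it is imported verbatim from Ref.~\cite{shi2021} --- but your argument (diagonalize $E_{\cS}$ in the orthogonal basis $\{\ket{\psi_j}\}$, then use the zero row at $\ket{u_t}$ together with $\braket{u_t}{\psi_j}\neq 0$ to force all eigenvalues to equal $\bra{u_t}E\ket{u_t}$) is essentially the standard one given in that reference, just phrased operator-theoretically rather than via the coefficient matrix of the $\ket{\psi_j}$'s.
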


 \section{$\text{OPSs}$ of the strongest nonlocality}	 	

  In this section, we give some constructions for OPSs of the strongest nonlocality in three, four, and five-partite systems.
	One notes that the OPSs we constructed are all with ``well structure" under the following sense: the states are exactly corresponding to the outermost layer of  an $n$-dimensional hypercube. Through some decompositions for the outermost layer, we can construct the desired OPSs. We use this idea to construct three, four and five-partite OPSs of the strongest nonlocality.

\subsection{OPSs of the strongest nonlocality in three-partite systems}\label{sec:OPS_tri}

		We start the construction OPSs of the strongest nonlocality by an example. Given a 3-dimensional hypercube with coordinates $\{0,1,2\}_A\times \{0,1,2\}_B \times \{0,1,2\}_C$, the outermost layer is $\{0,1,2\}_A\times \{0,1,2\}_B \times \{0,1,2\}_C\setminus \{\{1\}_A\times \{1\}_B \times \{1\}_C\}$. We can decompose the outermost layer in this way: $\cC_1=\{1,2\}_A\times \{0\}_B\times \{0,1\}_C$, $\cC_2=\{1,2\}_A\times \{0,1\}_B\times \{2\}_C$, $\cC_3=\{2\}_A\times \{1,2\}_B\times \{0,1\}_C$, $\cC_4=\{2\}_A\times \{2\}_B\times \{2\}_C$, $\cD_1=\{0,1\}_A\times \{2\}_B\times \{1,2\}_C$, $\cD_2=\{0,1\}_A\times \{1,2\}_B\times \{0\}_C$, $\cD_3=\{0\}_A\times \{0,1\}_B\times \{1,2\}_C$, $\cD_4=\{0\}_A\times \{0\}_B\times \{0\}_C$. See also Fig.~\ref{fig:333cube}. We can obtain an OPS 
	from the decomposition as follows,
	\begin{figure}[h]
		\centering
		\includegraphics[scale=0.55]{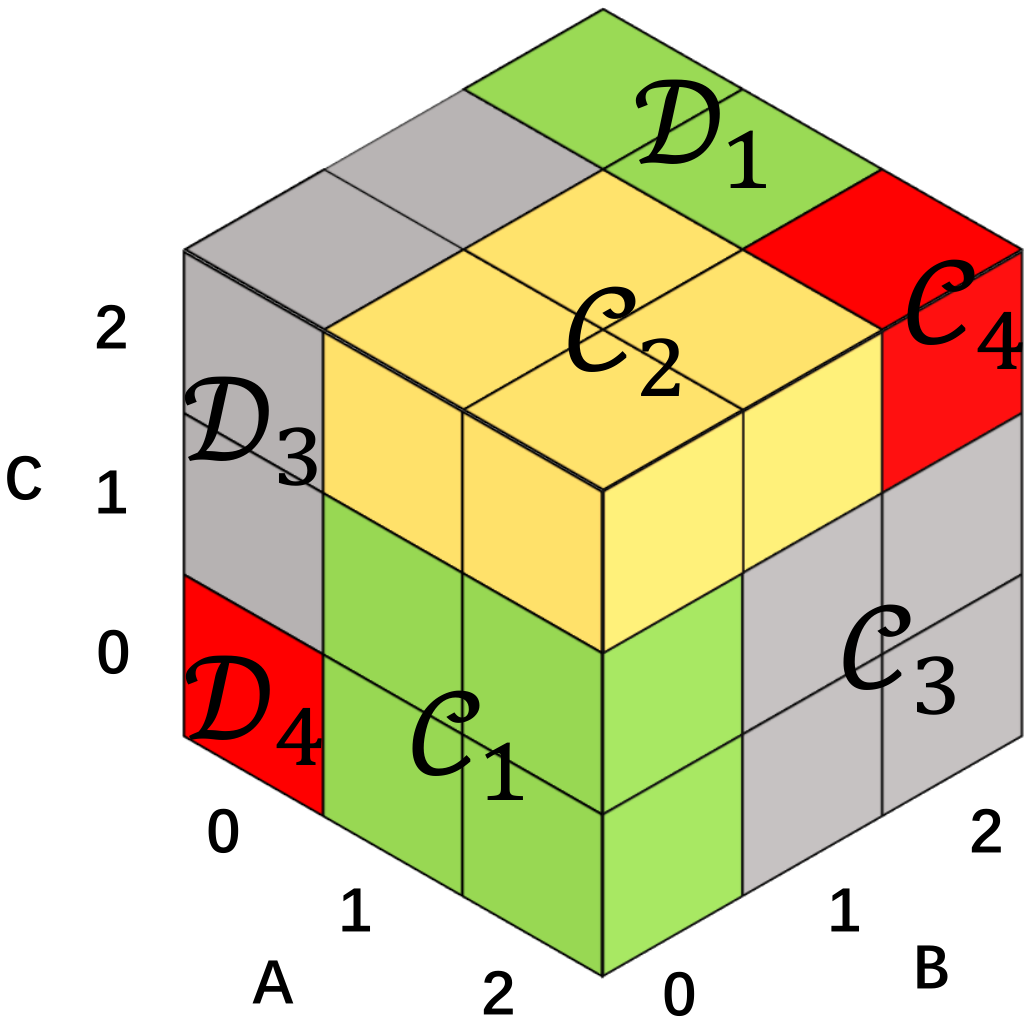}
		\caption{The decomposition for the outermost layer of a 3-dimensional hypercube with coordinates $\{0,1,2\}_A\times \{0,1,2\}_B \times \{0,1,2\}_C$.  }\label{fig:333cube}
	\end{figure}
	\begin{equation}\label{OPB333}
		\begin{aligned}
			\cC_1&:=\{\ket{\xi_i}_A\ket{0}_B\ket{\eta_j}_C\mid (i,j)\in\bbZ_2\times \bbZ_2  \},\\
			\cC_2&:=\{\ket{\xi_i}_A\ket{\eta_j}_B\ket{2}_C\mid(i,j)\in\bbZ_2\times \bbZ_2  \},\\				\cC_3&:=\{\ket{2}_A\ket{\xi_i}_B\ket{\eta_j}_C\mid(i,j)\in\bbZ_2\times \bbZ_2 \},\\
			\cC_4&:= \{ |2\rangle_A|2\rangle_B|2\rangle_C\},\\
			\cD_1&:=\{\ket{\eta_i}_A\ket{2}_B\ket{\xi_j}_C\mid(i,j)\in\bbZ_2\times \bbZ_2 \},\\
			\cD_2&:=\{\ket{\eta_i}_A\ket{\xi_j}_B\ket{0}_C\mid(i,j)\in\bbZ_2\times \bbZ_2 \},\\
			\cD_3&:=\{\ket{0}_A\ket{\eta_i}_B\ket{\xi_j}_C\mid(i,j)\in\bbZ_2\times \bbZ_2 \}, \\
			\cD_4&:= \{ |0\rangle_A|0\rangle_B|0\rangle_C\},\\			
		\end{aligned}
	\end{equation}
	where $|\eta_s\rangle_X=\ket{0}_X+(-1)^s\ket{1}_X, |\xi_s\rangle_X:=\ket{1}_X+(-1)^s\ket{2}_X$ for $s\in \bbZ_2$, 	 for any $X\in\{A,B,C\}$. 		
 Now, we show that $\cup_{i=1}^{4}(\cC_i,\cD_i)$ is of  the strongest nonlocality.

	\begin{example}\label{OPS_SN_333}
		In $3\otimes 3\otimes 3$, the set
		$\cup_{i=1}^{4}(\cC_i,\cD_i)$  given by Eq. \eqref{OPB333} is  an OPS of  the strongest nonlocality. The size of this set is $26$.
	\end{example}

	\begin{proof}
		Let $B$ and $C$ come together to  perform a joint  orthogonality-preserving POVM $\{E=M^{\dagger}M\}$, where $E=(a_{ij,k\ell})_{i,j,k,\ell\in\bbZ_3}$. Then the postmeasurement states $\{\mathbb{I}_A\otimes M\ket{\psi}\big |\ket{\psi}\in \cup_{i=1}^{4}(\cC_i,\cD_i)\}$ should be mutually orthogonal.  That is,
		\begin{equation}\label{OrthogonalOPS333}
		\begin{array}{rcl}
		0&=&{}_A \langle \phi_1| {}_B\langle \phi_2| {}_C\langle \phi_3| \mathbb{I}_A\otimes E |\psi_1\rangle_A|\psi_2\rangle_B|\psi_3\rangle_C = \langle \phi_1|\psi_1\rangle_A ({}_B\langle \phi_2| {}_C\langle \phi_3|  E   |\psi_2\rangle_B|\psi_3\rangle_C)
		\end{array}
		\end{equation}
		whenever $|\phi_1\rangle_A|\phi_2\rangle_B|\phi_3\rangle_C$ and $ |\psi_1\rangle_A|\psi_2\rangle_B|\psi_3\rangle_C$ are two different states from  the set $\cup_{i=1}^{4}(\cC_i,\cD_i)$. Observing that if $\langle \phi_1|\psi_1\rangle_A\neq 0$, one can also get that ${}_B\langle \phi_2| {}_C\langle \phi_3|  E   |\psi_2\rangle_B|\psi_3\rangle_C=0$. Our aim is to show that $E\propto\bbI$ by use this observation.
		
		The eight subsets $\cup_{i=1}^{4}\{\cC_i,\cD_i\}$ in $A|BC$ bipartition correspond to eight blocks of the $3\times 9$ grid in Fig.~\ref{fig:Tri_OPS_333}. For example, $\cC_1$ corresponds to the $2\times 2$ grid $\{(1,2)\times (00,01)\}$. Moreover,  $\cC_i$ is symmetrical to $\cD_i$ $(1\leq i\leq 4)$.

	\begin{figure}[h]
		\centering
		\includegraphics[scale=0.7]{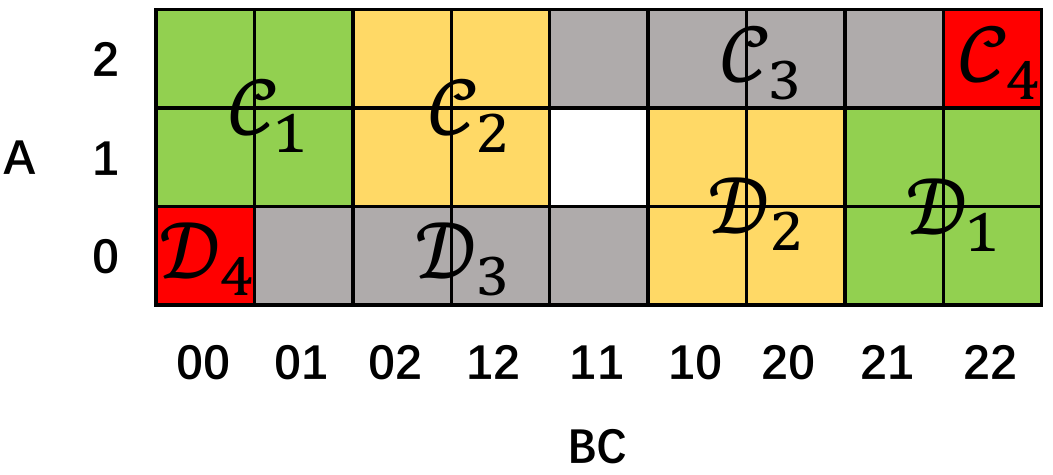}
		\caption{The corresponding  $3\times 9$ grid of $\cup_{i=1}^{4}\{\cC_i,\cD_i\}$ given by Eq. \eqref{OPB333} in $A|BC$ bipartition.  For example, $\cC_1$ corresponds to the $2\times 2$ grid $\{(1,2)\times (00,01)\}$. Moreover,  $\cC_i$ is symmetrical to $\cD_i$ for $1\leq i\leq 4$. }\label{fig:Tri_OPS_333}
	\end{figure}
	
	  We used some notations introduced in \cite{shi2021}. Let $\mathcal{S}=\{\ket{\psi_1}_A\ket{\psi_2}_B\ket{\psi_3}\}$ be a tripartite orthogonal product set. Define
    \begin{equation*}
    \mathcal{S}(|\psi\rangle_A):=\{ |\psi_2\rangle_B|\psi_3\rangle_C   \mid   |\psi\rangle_A |\psi_2\rangle_B|\psi_3\rangle_C \in  \mathcal{S}\}.
   \end{equation*}
   Moreover, define  $\mathcal{S}^{(A)}$ as the support of $\mathcal{S}(|\psi\rangle_A)$  which spans  $\mathcal{S}(|\psi\rangle_A)$.
    For example, in Eq.~\eqref{OPB333}, $\cC_1:=\{\ket{\xi_i}_A\ket{0}_B\ket{\eta_j}_C\mid (i,j)\in\bbZ_2\times \bbZ_2\}$. Then $\cC_1(\ket{\xi_1}_A)=\{\ket{0}_A\ket{\eta_j}_C\}_{j\in \bbZ_2}$, $\cC_1^{(A)}=\{\ket{0}_B\ket{0}_C,\ket{0}_B\ket{1}_C\}$, and $\cC_1(\ket{\xi_1}_A)$ is spanned by $\cC_1^{(A)}$. Actually,  $\{\cC_i^{(A)},\cD_i^{(A)}\}_{i=1}^4$   can be easily observed by Fig.~\ref{fig:Tri_OPS_333}. They are the projection sets of $\{\cC_i,\cD_i\}_{i=1}^4$ in $BC$ party in Fig.~\ref{fig:Tri_OPS_333}.

		Let $\mathcal{V}:=\{|j\rangle_B|k\rangle_C \mid j,k\in \bbZ_3\}$ be the computational basis of the $BC$ party. One finds that $\mathcal{V}$ is a disjoint union of the subsets $\mathcal{C}_1^{(A)},\mathcal{C}_2^{(A)},\mathcal{C}_3^{(A)},\mathcal{C}_4^{(A)}$. That is,
		\begin{equation*}
			\begin{aligned}
				\mathcal{V}&=\mathcal{C}_1^{(A)}\cup\mathcal{C}_2^{(A)}\cup\mathcal{C}_3^{(A)}\cup\mathcal{C}_4^{(A)}, \   \text{ and }  \mathcal{C}_i^{(A)} \cap \mathcal{C}_j^{(A)} =\emptyset
			\end{aligned}
		\end{equation*}
		 whenever $ i\neq j.$

		\noindent{\bf Step 1}   Considering any
		$|\Phi_1\rangle \in 	\mathcal{C}_1(|\xi_0\rangle_A)$,  $|\Phi_2\rangle \in 	\mathcal{C}_2(|\xi_0\rangle_A)$,  $|\Phi_3\rangle \in 	\mathcal{C}_3(|2\rangle_A)$,  $|\Phi_4\rangle \in 	\mathcal{C}_4(|2\rangle_A)$. Since $\ket{\xi_0}_A,\ket{\xi_0}_A,\ket{2}_A,\ket{2}_A$ are mutually non-orthogonal, by Eq.~\eqref{OrthogonalOPS333}, we have
		\begin{equation}
			\langle\Phi_i |E|\Phi_j\rangle=0, \quad \text{for} \ 1\leq i\neq j\leq 4.
		\end{equation}
		Note that  $ \mathcal{C}_1(|\xi_0\rangle_A)$,  $\mathcal{C}_2(|\xi_0\rangle_A)$, $\mathcal{C}_3(|2\rangle_A)$, $\mathcal{C}_4(|2\rangle_A)$  are spanned by $\cC_1^{(A)}$, $\cC_2^{(A)}$, $\cC_3^{(A)}$, $\cC_4^{(A)}$  respectively.  Applying  Lemma \ref{lem:zero}  to any two sets of $ \mathcal{C}_1(|\xi_0\rangle_A)$,  $\mathcal{C}_2(|\xi_0\rangle_A)$, $\mathcal{C}_3(|2\rangle_A)$, $\mathcal{C}_4(|2\rangle_A)$, we obtain
		\begin{equation}
			{}_{\cC_{i}^{(A)}}E_{\cC_{j}^{(A)}}=\mathbf{0}, \quad \text{for} \ 1\leq i\neq j\leq 4.
		\end{equation}
		Therefore, $E$ is a  block  diagonal matrix of the form
		\begin{equation}
			E= E_{\cC_1^{(A)}} \oplus  E_{\cC_2^{(A)}} \oplus  E_{\cC_3^{(A)}}\oplus  E_{\cC_4^{(A)}}.
		\end{equation}
		See also Fig. \ref{Edge333} (I).

				\begin{figure}[h]
		\centering
		\includegraphics[scale=0.42]{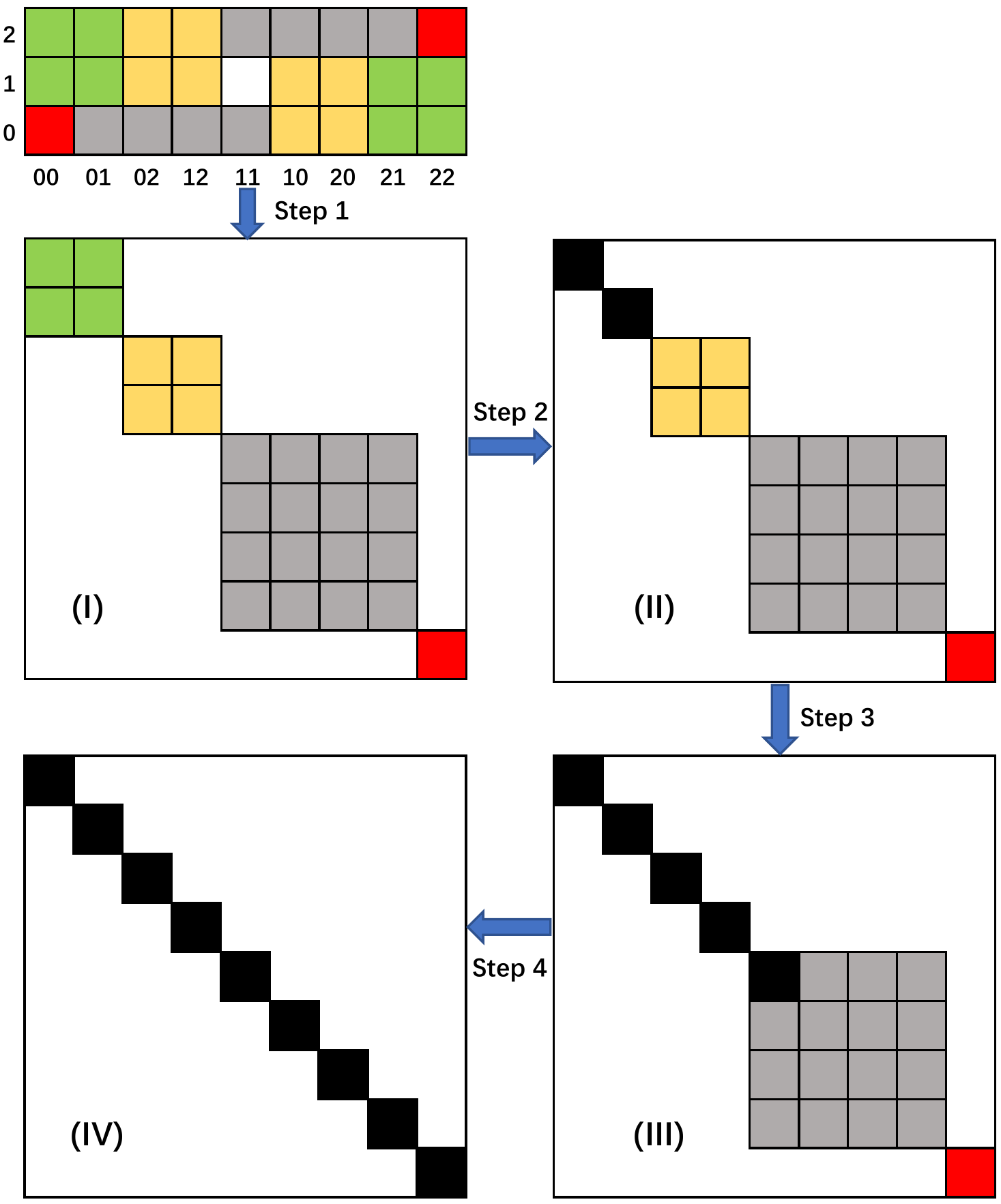}
		\caption{Proving steps of the strongest nonlocality in Example~\ref{OPS_SN_333}.  } \label{Edge333}
	\end{figure}

		\noindent {\bf Step 2} Applying Lemma \ref{lem:zero} to $\mathcal{D}_4(|0\rangle_A)$ and $\cD_3 (|0\rangle_A)$,  we have ${}_{\{|0\rangle_B|0\rangle_C\}}E_{\mathcal{D}_3^{(A)}}=\mathbf{0}$.
		We can also obtain ${}_{\{|0\rangle_B|0\rangle_C\}}E_{\mathcal{C}_1^{(A)}\setminus \{|0\rangle_B|0\rangle_C\}}=\mathbf{0}$ as $\mathcal{C}_1^{(A)}\setminus\{|0\rangle_B|0\rangle_C\}\subset \cD_3^{(A)}$. Applying Lemma \ref{lem:trivial} to the set $\mathcal{C}_1(|\xi_0\rangle_A)$, we obtain
		\begin{equation}
			E_{\mathcal{C}_1^{(A)}}=a\mathbb{I}_{\mathcal{C}_1^{(A)}}.
		\end{equation}  See also Fig. \ref{Edge333} (II).

		\noindent {\bf Step 3}   By Fig. \ref{Edge333} (II), we can obtain the equality  ${}_{\{|0\rangle_B|1\rangle_C\}}E_{\mathcal{D}_3^{(A)}\setminus \{|0\rangle_B|1\rangle_C\}}=\mathbf{0}$.  Applying Lemma \ref{lem:trivial} to the set  $\mathcal{D}_3(|0\rangle_A)$, we have
		$
			E_{\mathcal{D}_3^{(A)}}=b\mathbb{I}_{\mathcal{D}_3^{(A)}}.
		$
		Note that $\cD_3^{(A)}\cap \cC_1^{(A)}\neq \emptyset$. Then $b=a$. Therefore
		\begin{equation}
			E_{\mathcal{C}_1^{(A)}\cup\cD_3^{(A)}}=a\mathbb{I}_{\mathcal{C}_1^{(A)}\cup\cD_3^{(A)}}.
		\end{equation}
		See also Fig. \ref{Edge333} (III).
		
		\noindent {\bf Step 4} By the symmetry of Fig.~\ref{fig:Tri_OPS_333}, we can obtain that $E=a\mathbb{I}$. Thus $E$ is trivial. See also Fig. \ref{Edge333} (IV).
		
		Further, since the eight subsets $\cup_{i=1}^{4}\{\cC_i,\cD_i\}$ in any bipartition of  $\{A|BC, C|AB, B|CA\}$ correspond to a similar grid as Fig.~\ref{fig:Tri_OPS_333},  it implies that $AB$ or $CA$ can only perform a trivial orthogonality-preserving POVM. Thus, the OPS $\cup_{i=1}^{4}(\cC_i,\cD_i)$  given by Eq. \eqref{OPB333} is of  the strongest nonlocality.
	\end{proof}
	\vspace*{0.4cm}

	The above construction and its proof can be straightforwardly extended to the high dimensional systems. We can give a similar  decomposition for the outermost layer of a 3-dimensional hypercube with coordinates $\{0,1,\ldots,d_{A}-1\}_A\times \{0,1,\ldots,d_{B}-1\}_B \times \{0,1,\ldots,d_{C}-1\}_C$. Then we can obtain an OPS
	from the decomposition as follows,
	\begin{equation}\label{OPBd1d2d3}
		\begin{aligned}
			\cC_1&:=\{\ket{\xi_i}_A\ket{0}_B\ket{\eta_j}_C\mid (i,j)\in\bbZ_{d_A-1}\times \bbZ_{d_C-1}  \},\\
			\cC_2&:=\{\ket{\xi_i}_A\ket{\eta_j}_B\ket{d_C-1}_C\mid(i,j)\in\bbZ_{d_A-1}\times \bbZ_{d_B-1}  \},\\
			\cC_3&:=\{\ket{d_A-1}_A\ket{\xi_i}_B\ket{\eta_j}_C\mid(i,j)\in\bbZ_{d_B-1}\times \bbZ_{d_C-1} \},\\
			\mathcal{C}_4&:= \{ |d_A-1\rangle_A|d_B-1\rangle_B|d_C-1\rangle_C\},\\
			\cD_1&:=\{\ket{\eta_i}_A\ket{d_B-1}_B\ket{\xi_j}_C\mid(i,j)\in\bbZ_{d_A-1}\times \bbZ_{d_C-1} \},\\
			\cD_2&:=\{\ket{\eta_i}_A\ket{\xi_j}_B\ket{0}_C\mid(i,j)\in\bbZ_{d_A-1}\times \bbZ_{d_B-1} \},\\
			\cD_3&:=\{\ket{0}_A\ket{\eta_i}_B\ket{\xi_j}_C\mid(i,j)\in\bbZ_{d_B-1}\times \bbZ_{d_C-1} \}, \\
			\mathcal{D}_4&:= \{ |0\rangle_A|0\rangle_B|0\rangle_C\},	
		\end{aligned}
	\end{equation}
	where $\ket{\eta_s}_X=\sum_{t=0}^{d_X-2}w_{d_X-1}^{st}\ket{t}_X$, and $\ket{\xi_s}_X=\sum_{t=0}^{d_X-2}w_{d_X-1}^{st}\ket{t+1}_X$ for $s\in \bbZ_{d_X-1 }$, and $X\in\{A,B,C\}$. Therefore, we have the following theorem.

		\begin{figure}[h]
		\centering
		\includegraphics[scale=0.48]{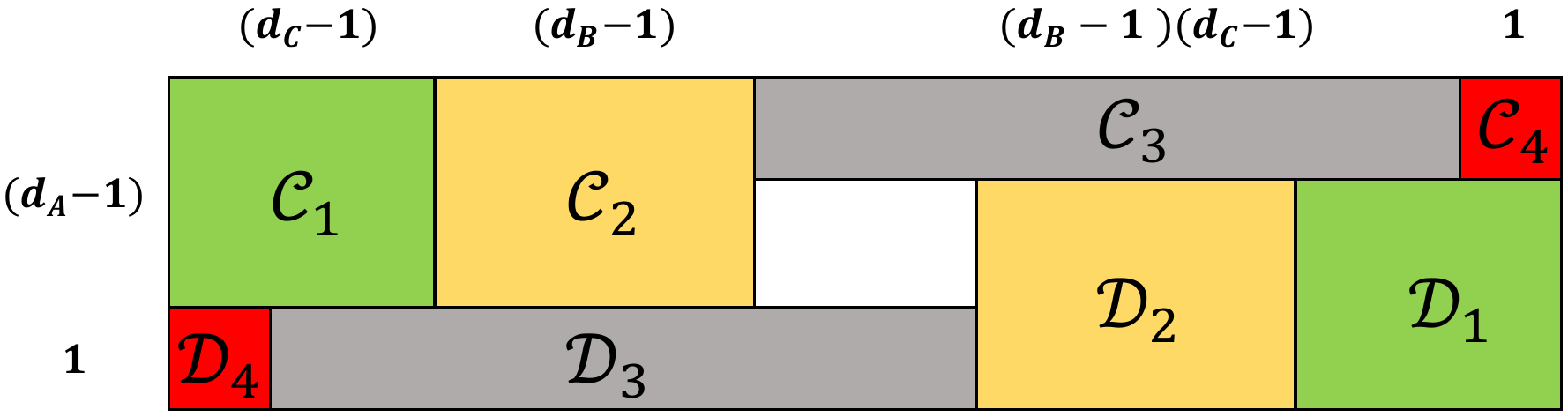}
		\caption{The corresponding  $d_A\times d_Bd_C$ grid of $\cup_{i=1}^{4}\{\cC_i,\cD_i\}$ given by Eq.~\eqref{OPBd1d2d3} in $A|BC$ bipartition.  Moreover,  $\cC_i$ is symmetrical to $\cD_i$ for $1\leq i\leq 4$. }\label{fig:Tri_OPS_dAdBdC}
	\end{figure}
	
	\begin{theorem}\label{OPS_SN_tri}
		In $d_A\otimes d_B\otimes d_C$, $d_A,d_B,d_C\geq 3$, the set $\cup_{i=1}^4(\cC_i,\cD_i)$ given by Eq. \eqref{OPBd1d2d3}  is an  OPS of  the strongest nonlocality. The size of this set is $d_Ad_Bd_C-(d_A-2)(d_B-2)(d_C-2)$.
	\end{theorem}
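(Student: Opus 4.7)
The plan is to adapt the four-step argument in the proof of Example~\ref{OPS_SN_333} to general $d_A,d_B,d_C\ge 3$, exploiting the cyclic symmetry of the construction in $\{A,B,C\}$ and the involution $\cC_i\leftrightarrow\cD_i$. First I would verify pairwise orthogonality of the eight subsets: within any single $\cC_i$ or $\cD_i$ orthogonality is immediate from the tensor of Fourier bases $\ket{\eta_s}_X,\ket{\xi_s}_X$; for two distinct subsets at least one tensor factor has either disjoint supports among $\{\ket{0}_X,\ldots,\ket{d_X-2}_X\}$ and $\{\ket{1}_X,\ldots,\ket{d_X-1}_X\}$, or contains the singletons $\ket{0}_X,\ket{d_X-1}_X$ separating the two sides. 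Summing cardinalities yields
\begin{equation*}
\sum_{i=1}^{4}\bigl(|\cC_i|+|\cD_i|\bigr)=2\bigl[(d_A-1)(d_B-1)+(d_A-1)(d_C-1)+(d_B-1)(d_C-1)\bigr]+2=d_Ad_Bd_C-(d_A-2)(d_B-2)(d_C-2).
\end{equation*}

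By cyclic symmetry it suffices to show that a joint orthogonality-preserving POVM $\{E=M^{\dagger}M\}$ performed by $BC$ is trivial. The $BC$-projections are easily read off from the construction: for instance $\cC_1^{(A)}=\{\ket{0}_B\ket{t}_C:0\le t\le d_C-2\}$, $\cC_3^{(A)}=\{\ket{s}_B\ket{t}_C:1\le s\le d_B-1,\,0\le t\le d_C-2\}$, $\cC_4^{(A)}=\{\ket{d_B-1}_B\ket{d_C-1}_C\}$, and these four blocks partition the computational basis of $BC$ (as do the $\cD_i^{(A)}$). Following Step~1 of the example, pick $\ket{\Phi_i}\in\cC_i(\ket{\psi_i}_A)$ with $\ket{\psi_1}_A=\ket{\psi_2}_A=\ket{\xi_0}_A$ and $\ket{\psi_3}_A=\ket{\psi_4}_A=\ket{d_A-1}_A$; all four A-labels are pairwise non-orthogonal since $\langle\xi_0|d_A-1\rangle_A=1$, so the orthogonality-preserving equation combined with Lemma~\ref{lem:zero} yields ${}_{\cC_i^{(A)}}E_{\cC_j^{(A)}}=\mathbf{0}$ for $i\neq j$, making $E=\bigoplus_{i=1}^{4}E_{\cC_i^{(A)}}$.

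Steps~2--4 propagate a single constant $a$ through all blocks by repeated application of Lemma~\ref{lem:trivial}. Starting from the overlap $\ket{0}_B\ket{0}_C\in\cC_1^{(A)}\cap\cD_4^{(A)}$, Lemma~\ref{lem:zero} applied to $\cD_4(\ket{0}_A)$ and $\cD_3(\ket{0}_A)$ kills ${}_{\{\ket{0}_B\ket{0}_C\}}E_{\cD_3^{(A)}}$, and since $\cC_1^{(A)}\setminus\{\ket{0}_B\ket{0}_C\}\subset\cD_3^{(A)}$, Lemma~\ref{lem:trivial} applied to $\cC_1(\ket{\xi_0}_A)$ (the Fourier coefficient at $\ket{0}_B\ket{0}_C$ is $1$ in every state) gives $E_{\cC_1^{(A)}}=a\,\bbI_{\cC_1^{(A)}}$. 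Next the overlap $\ket{0}_B\ket{1}_C\in\cC_1^{(A)}\cap\cD_3^{(A)}$, together with Step~1 block-diagonality and $E_{\cC_1^{(A)}}\propto\bbI$, forces ${}_{\{\ket{0}_B\ket{1}_C\}}E_{\cD_3^{(A)}\setminus\{\ket{0}_B\ket{1}_C\}}=\mathbf{0}$; Lemma~\ref{lem:trivial} applied to $\cD_3(\ket{0}_A)$ then yields $E_{\cD_3^{(A)}}=a\,\bbI_{\cD_3^{(A)}}$ (the constant matches through the overlap). Iterating this walk along $\cC_1^{(A)}\to\cD_3^{(A)}\to\cC_3^{(A)}\to\cD_2^{(A)}\to\cC_2^{(A)}\to\cD_1^{(A)}\to\cC_4^{(A)}$ in the grid of Fig.~\ref{fig:Tri_OPS_dAdBdC} (each consecutive pair sharing a corner or edge computational basis vector) propagates $a$ to every block, giving $E=a\,\bbI$. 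Cyclic symmetry in $\{A,B,C\}$ then yields triviality for the $AB$ and $CA$ bipartitions.

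The main obstacle is combinatorial rather than conceptual: for each link in the chain one must check that the chosen overlap basis vector (i) lies in a previously handled block so that its off-diagonal row in $E$ already vanishes, and (ii) carries nonzero amplitude in every state of the orthogonal set on which Lemma~\ref{lem:trivial} is applied. Both conditions hold uniformly in $d_A,d_B,d_C$: consecutive blocks always intersect along an edge or corner of the outermost hypercube layer whenever $d_A,d_B,d_C\ge 3$, and the Fourier coefficients of $\ket{\eta_s}_X,\ket{\xi_s}_X$ are unimodular on their supports. Consequently the chain argument of Example~\ref{OPS_SN_333} (depicted in Fig.~\ref{Edge333}) extends without essential change once the adjacency diagram is redrawn for general dimensions in the style of Fig.~\ref{fig:Tri_OPS_dAdBdC}.
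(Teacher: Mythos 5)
Your overall strategy is exactly the paper's: the same Step-1 block-diagonalization of $E$ with respect to $\{\cC_i^{(A)}\}_{i=1}^4$ via Lemma~\ref{lem:zero}, and the same seed $\cD_4^{(A)}=\{\ket{0}_B\ket{0}_C\}\subset\cC_1^{(A)}$ with $\cC_1^{(A)}\setminus\cD_4^{(A)}\subset\cD_3^{(A)}$ feeding Lemma~\ref{lem:trivial}; the size count and the identification of the projections $\cC_i^{(A)}$ are correct. However, the explicit propagation chain you write down is broken at two links. From Eq.~\eqref{OPBd1d2d3}, $\cD_2^{(A)}=\{\ket{s}_B\ket{0}_C:1\le s\le d_B-1\}$ and $\cC_2^{(A)}=\{\ket{s}_B\ket{d_C-1}_C:0\le s\le d_B-2\}$ are disjoint (they live in columns $0$ and $d_C-1$ of the $BC$ grid), and likewise $\cC_2^{(A)}$ and $\cD_1^{(A)}=\{\ket{d_B-1}_B\ket{t}_C:1\le t\le d_C-1\}$ are disjoint (rows $\le d_B-2$ versus row $d_B-1$). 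So your claim that ``consecutive blocks always intersect along an edge or corner of the outermost hypercube layer'' is false for the chain $\cC_1\to\cD_3\to\cC_3\to\cD_2\to\cC_2\to\cD_1\to\cC_4$, and the constant $a$ cannot be matched across those two links as stated. The repair is what the paper implicitly uses: $\cC_2^{(A)}\subset\cD_3^{(A)}$ and $\cD_2^{(A)}\subset\cC_3^{(A)}$, so those blocks inherit $a\,\bbI$ for free once $E_{\cD_3^{(A)}}$ and $E_{\cC_3^{(A)}}$ are trivialized, and $\cC_4^{(A)}\subset\cD_1^{(A)}$ together with $\cD_1^{(A)}\cap\cC_3^{(A)}\neq\emptyset$ finishes the last two blocks.

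There is a second, smaller gap in your condition (i). At the link $\cD_3^{(A)}\to\cC_3^{(A)}$ the overlap vector $u$ lies in the previously handled block $\cD_3^{(A)}$, but $\cC_3^{(A)}\setminus\{u\}$ contains vectors such as $\ket{s}_B\ket{0}_C$ that lie outside $\cD_3^{(A)}$ yet inside $\cC_3^{(A)}$, so neither $E_{\cD_3^{(A)}}\propto\bbI$ nor the Step-1 block-diagonality kills those entries of the row of $u$. You additionally need ${}_{\cD_i^{(A)}}E_{\cD_j^{(A)}}=\mathbf{0}$ for $i\neq j$, which follows by applying Lemma~\ref{lem:zero} to representatives $\cD_i(\ket{\eta_0}_A)$ and $\cD_j(\ket{0}_A)$ (all $A$-parts pairwise non-orthogonal), exactly mirroring Step~1. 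With the $\cD$-block zeros added and the chain corrected as above, your argument closes and coincides with the paper's.
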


	\begin{proof}
		The eight subsets $\cup_{i=1}^{4}\{\cC_i,\cD_i\}$ in $A|BC$ bipartition correspond to eight blocks of the $d_A\times d_Bd_C$ grid in Fig.~\ref{fig:Tri_OPS_dAdBdC}. Note that
		\begin{equation}
			\begin{aligned}
				&\cD_{3}^{(A)}\cap\cC_j^{(A)}\neq \emptyset \quad \text{for} \ j=1,2,3;\ \ \  &\cD_{4}^{(A)}\cap\cC_1^{(A)}\neq \emptyset.\\
			\end{aligned}
		\end{equation}
		Since  the eight subsets $\cup_{i=1}^{4}\{\cC_i,\cD_i\}$ in any bipartition of  $\{A|BC, C|AB, B|CA\}$ correspond to a similar grid as Fig.~\ref{fig:Tri_OPS_dAdBdC}, we only need to consider Fig.~\ref{fig:Tri_OPS_dAdBdC}.  Let $B$ and $C$ come together to  perform a joint  orthogonality-preserving POVM $\{E=M^{\dagger}M\}$, where $E=(a_{ij,k\ell})_{i,k\in \bbZ_{d_B}, j,\ell\in \bbZ_{d_C}}$. Then the postmeasurement states $\{\mathbb{I}_A\otimes M\ket{\psi}\big |\ket{\psi}\in \cup_{i=1}^{4}(\cC_i,\cD_i)\}$ should be mutually orthogonal. By Lemma  ~\ref{lem:zero}, similar as the step 1 of  Example~\ref{OPS_SN_333}, via  $\cC_1(\ket{\xi_0}_A), 	\cC_2(\ket{\xi_0}_A), 	\cC_3(\ket{d_A-1}),	\cC_4(\ket{d_A-1})$ we can get
		$$E= E_{\cC_1^{(A)}} \oplus  E_{\cC_2^{(A)}} \oplus  E_{\cC_3^{(A)}}\oplus  E_{\cC_4^{(A)}}.$$ We can complete the  proof  similar as the other steps of Example~\ref{OPS_SN_333}. See Eq.~\eqref{eq:analysis_dAdBdC} for a sketch of the analysis,
		\begin{equation}\label{eq:analysis_dAdBdC}
			\begin{aligned}
				\cD_4(\ket{0}_A),\cD_3(\ket{0}_A)\xrightarrow{\text{Lemma}~\ref{lem:zero}} &  {}_{\cD_4^{(A)}}E_{\cD_3^{(A)}}=\textbf{0},\\
				\cC_1(\ket{\xi_0}_A)\xrightarrow{\text{Lemma}~\ref{lem:trivial}} &  E_{\cC_1^{{(A)}}}=a\bbI_{\cC_1^{{(A)}}},\\
				\cD_3(\ket{0}_A)\xrightarrow{\text{Lemma}~\ref{lem:trivial}} &
				E_{\cD_3^{{(A)}}}=b\bbI_{\cD_3^{{(A)}}}, \\
				\cC_1^{(A)}\cap \cD_3^{(A)}\neq \emptyset\longrightarrow& E_{\mathcal{C}_1^{(A)}\cup\cD_3^{(A)}}=a\mathbb{I}_{\mathcal{C}_1^{(A)}\cup\cD_3^{(A)}}. 		
			\end{aligned}
		\end{equation}
		By the symmetry of Fig.~\ref{fig:Tri_OPS_333}, we can obtain that $E=a\mathbb{I}$. Thus $E$ is trivial.
	\end{proof}
	\vspace{0.4cm}

Very fortunately, we find that the above idea for  constructing strongest nonlocal set of product states can be extended to four and five-partite systems. However, the structure of the states are more complex and difficult than the tripartite cases.  

	\subsection{OPSs of the strongest nonlocality in four-partite systems}\label{sec:OPS_four}

	First, we need to consider the decomposition for the outermost layer of a 4-dimensional hypercube with coordinates $\{0,1,\ldots,d_{A}-1\}_A\times \{0,1,\ldots,d_{B}-1\}_B \times \{0,1,\ldots,d_{C}-1\}_C\times \{0,1,\ldots,d_{D}-1\}_D$. The decomposition can be shown in the following  OPS 
	
	\begin{equation}\label{OPB_four_general}
		\begin{aligned}
			\cC_1&:=\{\ket{\xi_i}_A\ket{\eta_j}_B|0\rangle_C|\xi_k\rangle_D\mid (i,j,k)\in\bbZ_{d_A-1}\times \bbZ_{d_B-1} \times\bbZ_{d_D-1}  \},\\	\cC_2&:=\{\ket{\xi_i}_A\ket{d_B-1}_B|\eta_j\rangle_C|\eta_k\rangle_D\mid      (i,j,k)\in\bbZ_{d_A-1}\times \bbZ_{d_C-1} \times\bbZ_{d_D-1}  \},\\						
			\cC_3&:=\{\ket{\xi_i}_A\ket{\xi_j}_B|\xi_k\rangle_C|d_D-1\rangle_D\mid (i,j,k)\in\bbZ_{d_A-1}\times \bbZ_{d_B-1} \times\bbZ_{d_D-1}  \},\\
			\cC_4&:=\{\ket{\xi_i}_A\ket{d_B-1}_B\ket{0}_C\ket{d_D-1}_D\mid i\in\bbZ_{d_A-1}\},\\
			\cC_5&:=\{\ket{d_A-1}_A\ket{\eta_i}_B|\xi_j\rangle_C|\eta_k\rangle_D\mid (i,j,k)\in\bbZ_{d_B-1}\times \bbZ_{d_C-1} \times\bbZ_{d_D-1}  \},\\	
			\cC_6&:=\{\ket{d_A-1}_A\ket{\eta_i}_B\ket{0}_C\ket{0}_D\mid i\in\bbZ_{d_B-1}\},\\
			\cC_7&:=\{\ket{d_A-1}_A\ket{0}_B\ket{\xi_i}_C\ket{d_D-1}_D\mid i\in\bbZ_{d_C-1}\},\\
			\cC_8&:=\{\ket{d_A-1}_A\ket{d_B-1}_B\ket{d_C-1}_C\ket{\eta_i}_D\mid i\in\bbZ_{d_D-1}\},\\
			\cD_1&:=\{\ket{\eta_i}_A\ket{\xi_j}_B|d_C-1\rangle_C|\eta_k\rangle_D\mid (i,j,k)\in\bbZ_{d_A-1}\times \bbZ_{d_B-1} \times\bbZ_{d_D-1} \},\\
			\cD_2&:=\{\ket{\eta_i}_A\ket{0}_B|\xi_j\rangle_C|\xi_k\rangle_D\mid (i,j,k)\in\bbZ_{d_A-1}\times \bbZ_{d_C-1} \times\bbZ_{d_D-1}  \},\\						
			\cD_3&:=\{\ket{\eta_i}_A\ket{\eta_j}_B|\eta_k\rangle_C|0\rangle_D\mid (i,j,k)\in\bbZ_{d_A-1}\times \bbZ_{d_B-1} \times\bbZ_{d_D-1}   \},\\
			\cD_4&:=\{\ket{\eta_i}_A\ket{0}_B\ket{d_C-1}_C\ket{0}_D\mid i\in\bbZ_{d_A-1}\},\\
			\cD_5&:=\{\ket{0}_A\ket{\xi_i}_B|\eta_j\rangle_C|\xi_k\rangle_D\mid (i,j,k)\in\bbZ_{d_B-1}\times \bbZ_{d_C-1} \times\bbZ_{d_D-1}  \},\\			
			\cD_6&:=\{\ket{0}_A\ket{\xi_i}_B\ket{d_C-1}_C\ket{d_D-1}_D\mid i\in\bbZ_{d_B-1}\},\\
			\cD_7&:=\{\ket{0}_A\ket{d_B-1}_B\ket{\eta_i}_C\ket{0}_D\mid i\in\bbZ_{d_C-1}\},\\
			\cD_8&:=\{\ket{0}_A\ket{0}_B\ket{0}_C\ket{\xi_i}_D\mid i\in\bbZ_{d_D-1}\},
		\end{aligned}
	\end{equation}
    where $\ket{\eta_s}_X=\sum_{t=0}^{d_X-2}w_{d_X-1}^{st}\ket{t}_X$, and $\ket{\xi_s}_X=\sum_{t=0}^{d_X-2}w_{d_X-1}^{st}\ket{t+1}_X$ for $s\in \bbZ_{d_X-1 }$, and $X\in\{A,B,C,D\}$. Now, we show that the above OPS is of the strongest nonlocality.
	
	\begin{theorem}\label{OPS_SN_four_parties_general}
		In $d_A\otimes d_B\otimes d_C\otimes d_D$, $d_A,d_B,d_C,d_D\geq 3$, the set    $\cup_{i=1}^8(\cC_i,\cD_i)$ given by Eq. \eqref{OPB_four_general}  is an  OPS of  the strongest nonlocality. The size of this set is $d_Ad_Bd_Cd_D-(d_A-2)(d_B-2)(d_C-2)(d_D-2)$.
	\end{theorem}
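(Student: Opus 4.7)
The proof will follow the same scheme as Theorem \ref{OPS_SN_tri}. By the criterion recalled in the preliminaries, it suffices to show that the joint party $B_r$ can only perform a trivial orthogonality-preserving POVM for each $r\in\{1,2,3,4\}$, where $B_1=BCD$, $B_2=CDA$, $B_3=DAB$, $B_4=ABC$. We shall describe the argument for $B_1=BCD$ in detail; the remaining three cases are handled by an entirely analogous scheme, with the roles of the sixteen subsets $\{\cC_i,\cD_i\}_{i=1}^{8}$ permuted according to which party is singled out.

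Step~1 (block diagonalisation): the plan is to choose $\ket{\phi_i}_A=\ket{\xi_0}_A$ for $i=1,\ldots,4$ and $\ket{\phi_i}_A=\ket{d_A-1}_A$ for $i=5,\ldots,8$; these eight states are pairwise non-orthogonal because $\langle\xi_0|d_A-1\rangle=1$ and $\langle d_A-1|d_A-1\rangle=1$. A direct inspection of the coordinate ranges in Eq.~\eqref{OPB_four_general} shows that the projected supports $\cC_1^{(A)},\ldots,\cC_8^{(A)}\subset \mathbb{Z}_{d_B}\times\mathbb{Z}_{d_C}\times\mathbb{Z}_{d_D}$ are pairwise disjoint and that their sizes add up to $d_Bd_Cd_D$, so they form a partition of the $BCD$ computational basis. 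Applying Lemma~\ref{lem:zero} to each unordered pair $(\cC_i(\ket{\phi_i}_A),\cC_j(\ket{\phi_j}_A))$ yields ${}_{\cC_i^{(A)}}E_{\cC_j^{(A)}}=\mathbf{0}$ for $i\neq j$, so
\[
E=\bigoplus_{i=1}^{8}E_{\cC_i^{(A)}}.
\]

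Steps~2 and 3 (each block is a scalar, and all scalars coincide): for each $i$ we aim to locate a basis element $\ket{u_{t_i}}\in\cC_i^{(A)}$ together with an auxiliary set $\cD_{k_i}$ whose $A$-factor is non-orthogonal to $\ket{\phi_i}_A$ and whose projection $\cD_{k_i}^{(A)}$ both contains $\ket{u_{t_i}}$ and absorbs all of $\cC_i^{(A)}\setminus\{\ket{u_{t_i}}\}$. Lemma~\ref{lem:zero} applied to $\cD_{k_i}(\cdot)$ against $\{\ket{u_{t_i}}\}$ then forces ${}_{\{\ket{u_{t_i}}\}}E_{\cC_i^{(A)}\setminus\{\ket{u_{t_i}}\}}=\mathbf{0}$; since all coefficients $w_{d_\bullet-1}^{st}$ are non-zero, the overlap $\langle u_{t_i}|\psi_j\rangle$ is non-zero for every $\ket{\psi_j}\in\cC_i(\ket{\phi_i}_A)$, so Lemma~\ref{lem:trivial} yields $E_{\cC_i^{(A)}}=a_i\bbI_{\cC_i^{(A)}}$. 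Running the same machinery with the eight $\cD_j$-sets produces $E_{\cD_j^{(A)}}=b_j\bbI_{\cD_j^{(A)}}$. Because the graph on $\{\cC_i^{(A)}\}_{i=1}^{8}\cup\{\cD_j^{(A)}\}_{j=1}^{8}$ whose edges record non-empty intersections is connected, all $a_i$'s and $b_j$'s must coincide with a single scalar $a$, so $E=a\bbI$ and the measurement is trivial.

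The principal obstacle will be the bookkeeping in Steps~2--3: for each of the eight $\cC_i^{(A)}$ blocks (and again for each $\cD_j^{(A)}$) one must exhibit both a witness $\ket{u_{t_i}}$ and a ``companion'' set with exactly the right coordinate overlap, which demands a tedious but mechanical case analysis over the coordinate patterns that make up the outermost layer of the $4$-dimensional hypercube. Moreover, the four-partite construction is \emph{not} invariant under a single cyclic relabelling of $(A,B,C,D)$, so the partition property of the projected supports and the connectivity of the intersection graph must be verified afresh for each of the bipartitions $B|CDA$, $C|DAB$, and $D|ABC$; fortunately Eq.~\eqref{OPB_four_general} has been designed precisely so that an analogous block/overlap picture recurs in each case, after which Steps~1--3 apply verbatim.
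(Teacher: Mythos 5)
Your proposal follows the paper's proof essentially step for step: the same block diagonalisation of $E$ into the eight blocks $E_{\cC_i^{(A)}}$ via Lemma~\ref{lem:zero} applied with the pairwise non-orthogonal $A$-factors $\ket{\xi_0}_A$ and $\ket{d_A-1}_A$, the same use of Lemma~\ref{lem:trivial} with a witness basis element to make each block scalar, and the same linking of the scalars through non-empty intersections (the paper records exactly your ``intersection graph'' in Eq.~\eqref{eq:relation_set_dddd}). Your unverified combinatorial claims are true of Eq.~\eqref{OPB_four_general}: the sizes of the $\cC_i^{(A)}$ do add up to $d_Bd_Cd_D$ and they are pairwise disjoint, and the intersection graph is connected.

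The one substantive shortfall is that Steps~2--3 are a promissory note: you assert that for each block a witness and a companion set with ``exactly the right coordinate overlap'' exist, and defer their identification as tedious bookkeeping, whereas exhibiting them is precisely what the construction was engineered for and what the paper's proof consists of. The paper's chain is short and concrete: $\cC_4^{(A)}=\{\ket{d_B-1}_B\ket{0}_C\ket{d_D-1}_D\}$ is a singleton contained in $\cD_5^{(A)}$, so the Step~1 block structure already gives ${}_{\cC_4^{(A)}}E_{\cD_5^{(A)}\setminus\cC_4^{(A)}}=\mathbf{0}$ and Lemma~\ref{lem:trivial} makes $E_{\cD_5^{(A)}}$ scalar; Lemma~\ref{lem:zero} then kills ${}_{\cD_5^{(A)}}E_{\cD_i^{(A)}}$ for $i=6,7,8$; the complement relations $\cD_8^{(A)}=\cC_1^{(A)}\setminus\cD_5^{(A)}$, $\cD_7^{(A)}=\cC_2^{(A)}\setminus\cD_5^{(A)}$, $\cD_6^{(A)}=\cC_3^{(A)}\setminus\cD_5^{(A)}$ supply the zero rows needed to run Lemma~\ref{lem:trivial} on $\cC_1,\cC_2,\cC_3$; and the $\cC\leftrightarrow\cD$ symmetry finishes. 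Note also that your stated mechanism for producing the zero rows --- ``Lemma~\ref{lem:zero} applied to $\cD_{k_i}(\cdot)$ against $\{\ket{u_{t_i}}\}$'' --- cannot be taken literally, since Lemma~\ref{lem:zero} requires the two supports to be disjoint while your $\ket{u_{t_i}}$ lies inside $\cD_{k_i}^{(A)}$; the zero rows in fact come from the already-established block-diagonal form and previously scalarised blocks, as above. With the witnesses made explicit and that mechanism corrected, your argument coincides with the paper's.
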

	
	\begin{figure}[h]
		\centering
		\includegraphics[scale=0.5]{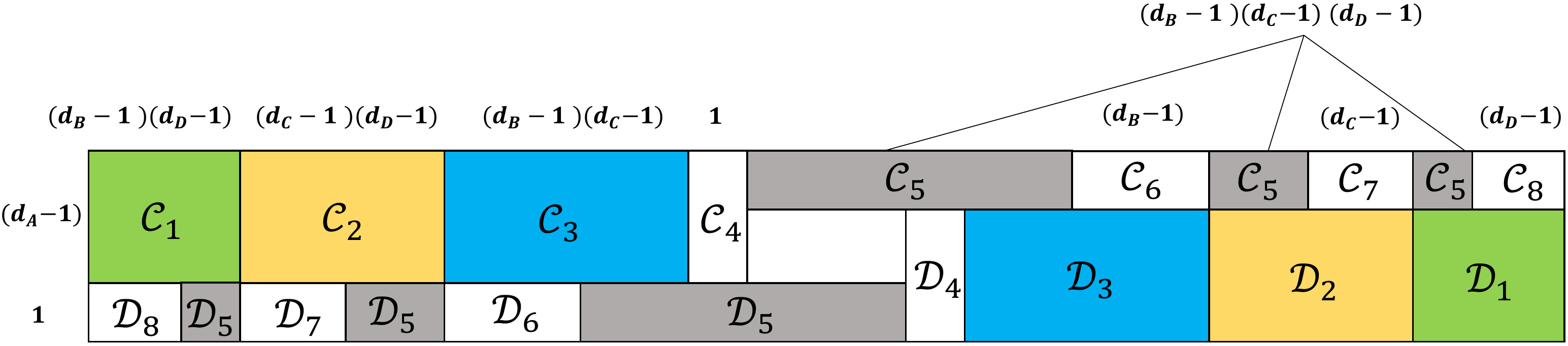}
		\caption{The corresponding  $d_A\times d_Bd_Cd_D$ grid of $\cup_{i=1}^{8}\{\cC_i,\cD_i\}$ given by Eq.~\eqref{OPB_four_general}  in $A|BCD$ bipartition.  Moreover,  $\cC_i$ is symmetrical to $\cD_i$ for $1\leq i\leq 8$. }\label{fig:Four_OPS_dAdBdCdD}
	\end{figure}

	\begin{proof}
		The 16 subsets $\cup_{i=1}^{8}\{\cC_i,\cD_i\}$ in $A|BCD$ bipartition correspond to 16 blocks of the $d_A\times d_Bd_Cd_D$ grid in Fig.~\ref{fig:Four_OPS_dAdBdCdD}. Note that
		\begin{equation}\label{eq:relation_set_dddd}
			\begin{aligned}
				&\cD_{5}^{(A)}\cap\cC_j^{(A)}\neq \emptyset \quad \text{for} \ j=1,2,3,4,5;\\ &\cD_{6}^{(A)}\cap\cC_3^{(A)}\neq \emptyset;\\
				&\cD_{7}^{(A)}\cap\cC_2^{(A)}\neq \emptyset;\\
				&\cD_{8}^{(A)}\cap\cC_1^{(A)}\neq \emptyset.\\
			\end{aligned}
		\end{equation}
		Since  the 16 subsets $\cup_{i=1}^{8}\{\cC_i,\cD_i\}$ in any bipartition of $\{A|BCD,D|ABC, C|DAB, B|CDA\}$ correspond to a similar grid as Fig.~\ref{fig:Four_OPS_dAdBdCdD}, we only need to consider Fig.~\ref{fig:Four_OPS_dAdBdCdD}.  Let $B$, $C$ and $D$ come together to  perform a joint  orthogonality-preserving POVM $\{E=M^{\dagger}M\}$, where $E=(a_{ijk,\ell mn})_{i,\ell\in \bbZ_{d_B}, j,m\in \bbZ_{d_C}, k,n\in \bbZ_{d_D}}$. Then the postmeasurement states $\{\mathbb{I}_A\otimes M\ket{\psi}\big |\ket{\psi}\in \cup_{i=1}^{8}(\cC_i,\cD_i)\}$ should be mutually orthogonal.
		
		\noindent {\bf Step 1}	  Applying Lemma~\ref{lem:zero} to any two elements of $\{\{\cC_i(\ket{\xi_0}_A)\}_{j=1}^4,\{\cC_j(\ket{d_A-1})\}_{j=5}^8\}$,  we have
		\begin{equation}\label{eq:E_8diagonal}
			E=\oplus_{j=1}^8 E_{\cC_j^{(A)}}.
		\end{equation}	
		
		\noindent {\bf Step 2} By Eq.~\eqref{eq:E_8diagonal}, we know that ${}_{\cC_4^{(A)}}E_{\cD_5^{(A)}\setminus\cC_4^{(A)}}=\textbf{0}$. Note that $|\cC_4^{(A)}|=1$, more exactly,  $\cC_4^{(A)}=\{\ket{d_B-1}_B\ket{0}_C\ket{d_D-1}_D\}$. Applying Lemma~\ref{lem:trivial} to $\cD_5({\ket{0}_A})$, we have
		\begin{equation}\label{eq:cD_5trivial}
			E_{\cD_5^{(A)}}=a\bbI_{\cD_5^{(A)}}.
		\end{equation}	
		Next, applying Lemma~\ref{lem:zero} to $\cD_5({\ket{0}_A})$ and $\cD_i({\ket{0}_A})$ for $i=6,7,8$, we obtain
		\begin{equation}\label{eq:D_5D_i}
			{}_{\cD_5^{(A)}}E_{\cD_i^{(A)}}=\textbf{0} \quad \text{for} \ i=6,7,8. \end{equation}	
		
		\noindent {\bf Step 3} Note that $\cD_{6}^{(A)}=\cC_3^{(A)}\setminus \cD_{5}^{(A)},\
		\cD_{7}^{(A)}=\cC_2^{(A)}\setminus \cD_{5}^{(A)},$ and $
		\cD_{8}^{(A)}=\cC_1^{(A)}\setminus \cD_{5}^{(A)}. $
		By Eqs.~\eqref{eq:cD_5trivial} and \eqref{eq:D_5D_i}, we know that for any $\ket{j}_B\ket{k}_C\ket{\ell}_D\in \cD_5^{(A)}\cap\cC_{i}^{(A)}$ for $i=1,2,3$,  ${}_{\{\ket{j}_B\ket{k}_C\ket{\ell}_D\}}E_{\cC_i^{(A)}\setminus\{\ket{j}_B\ket{k}_C\ket{\ell}_D\}}=\textbf{0}$ for $i=1,2,3$.  Applying Lemma~\ref{lem:trivial} to $\cC_i({\ket{\xi_0}_A})$ for $i=1,2,3$. Then we have
		\begin{equation}
			E_{\cC_i^{(A)}}=a_i\bbI_{\cC_i^{(A)}} \quad \text{for} \ i=1,2,3.
		\end{equation}	
		Since $\cD_5^{(A)}\cap\cC_i^{{(A)}}\neq \emptyset$ for $i=1,2,3$, it implies $a_i=a$ for  $i=1,2,3$. Thus 	
		\begin{equation}E_{\{\cup_{i=1}^{3}\cC_i^{(A)}\}\cup \cD_5^{(A)}}=a\bbI_{\{\cup_{i=1}^{3}\cC_i^{(A)}\}\cup \cD_5^{(A)}}.
		\end{equation}	
		
		\noindent {\bf Step 4} By the symmetry of Fig.~\ref{fig:Four_OPS_dAdBdCdD}, we can obtain that $E=a\mathbb{I}$. Thus $E$ is trivial.	
	\end{proof}


	\subsection{OPSs of the strongest nonlocality in five-partite systems}\label{sec:OPS_five}

Consider the cyclic permutation of the eight sets  $\{\ket{0}_A\ket{\xi_i}_B\ket{\eta_j}_C\ket{\xi_k}_D\ket{\eta_\ell}_E\}$, $\{\ket{0}_A\ket{0}_B\ket{\xi_i}_C\ket{d-1}_D\ket{\eta_j}_E\}$,  $\{\ket{0}_A\ket{0}_B\ket{0}_C\ket{\xi_i}_D\ket{\eta_j}_E\}$,  $\{\ket{0}_A\ket{0}_B\ket{0}_C\ket{0}_D\ket{0}_E\}$,  $\{\ket{d-1}_A\ket{\eta_i}_B\ket{\xi_j}_C\ket{\eta_k}_D\ket{\xi_\ell}_E\}$, $\{\ket{d-1}_A\ket{d-1}_B\ket{\eta_i}_C\ket{0}_D\ket{\xi_j}_E\}$,  $\{\ket{d-1}_A\ket{d-1}_B\ket{d-1}_C\ket{\eta_i}_D\ket{\xi_j}_E\}$, $\{\ket{d-1}_A\ket{d-1}_B\ket{d-1}_C\ket{d-1}_D\ket{d-1}_E\}$. Then we can obtain a decomposition with $32$ blocks for the outermost layer of a 5-dimensional hypercube with coordinates $\{0,1,\ldots,d-1\}_A\times \{0,1,\ldots,d-1\}_B \times \{0,1,\ldots,d-1\}_C\times \{0,1,\ldots,d-1\}_D\times \{0,1,\ldots,d-1\}_E$. Similarly, we can obtain a decomposition with $32$ blocks for the outermost layer of a 5-dimensional hypercube with coordinates $\{0,1,\ldots,d_A-1\}_A\times \{0,1,\ldots,d_B-1\}_B \times \{0,1,\ldots,d_C-1\}_C\times \{0,1,\ldots,d_D-1\}_D\times \{0,1,\ldots,d_E-1\}_E$ as follows,

	\begin{equation}\label{OPB_five_general}
		\begin{aligned}
			\cC_1&:=\{\ket{\xi_i}_A\ket{d_B-1}_B\ket{\eta_j}_C\ket{\xi_k}_D\ket{\eta_\ell}_E\mid (i,j,k,\ell)\in\bbZ_{d_A-1}\times \bbZ_{d_C-1} \times\bbZ_{d_D-1} \times\bbZ_{d_E-1} \},\\	\cC_2&:=\{\ket{\xi_i}_A\ket{\eta_j}_B\ket{0}_C\ket{\xi_k}_D\ket{\eta_\ell}_E\mid (i,j,k,\ell)\in\bbZ_{d_A-1}\times \bbZ_{d_B-1} \times\bbZ_{d_D-1} \times\bbZ_{d_E-1} \},\\						
			\cC_3&:=\{\ket{\xi_i}_A\ket{\eta_j}_B\ket{\xi_k}_C\ket{d_D-1}_D\ket{\eta_\ell}_E\mid (i,j,k,\ell)\in\bbZ_{d_A-1}\times \bbZ_{d_B-1} \times\bbZ_{d_C-1} \times\bbZ_{d_E-1} \},\\
			\cC_4&:=\{\ket{\xi_i}_A\ket{\eta_j}_B\ket{\xi_k}_C\ket{\eta_\ell}_D\ket{0}_E\mid (i,j,k,\ell)\in\bbZ_{d_A-1}\times \bbZ_{d_B-1} \times\bbZ_{d_C-1} \times\bbZ_{d_D-1} \},\\
			\cC_5&:=\{\ket{\xi_i}_A\ket{d_B-1}_B\ket{d_C-1}_C\ket{\eta_j}_D\ket{0}_E\mid (i,j)\in\bbZ_{d_A-1}\times \bbZ_{d_D-1} \},\\
			\cC_6&:=\{\ket{\xi_i}_A\ket{d_B-1}_B\ket{\eta_j}_C\ket{0}_D\ket{0}_E\mid (i,j)\in\bbZ_{d_A-1}\times \bbZ_{d_C-1} \},\\
			\cC_7&:=\{\ket{\xi_i}_A\ket{d_B-1}_B\ket{d_C-1}_C\ket{d_D-1}_D\ket{\eta_j}_E\mid (i,j)\in\bbZ_{d_A-1}\times \bbZ_{d_E-1} \},\\
			\cC_8&:=\{\ket{\xi_i}_A\ket{\eta_j}_B\ket{0}_C\ket{0}_D\ket{0}_E\mid (i,j)\in\bbZ_{d_A-1}\times \bbZ_{d_B-1} \},\\
			\cC_9&:=\{\ket{d_A-1}_A\ket{\eta_i}_B\ket{\xi_j}_C\ket{\eta_k}_D\ket{\xi_\ell}_E\mid (i,j,k,\ell)\in \bbZ_{d_B-1} \times\bbZ_{d_C-1} \times\bbZ_{d_D-1}\times \bbZ_{d_E-1}\},\\
			\cC_{10}&:=\{\ket{d_A-1}_A\ket{d_B-1}_B\ket{d_C-1}_C\ket{d_D-1}_D\ket{d_E-1}_E\},\\
			\cC_{11}&:=\{\ket{d_A-1}_A\ket{\eta_i}_B\ket{\xi_j}_C\ket{d_D-1}_D\ket{d_E-1}_E\mid (i,j)\in \bbZ_{d_B-1} \times\bbZ_{d_C-1}\},\\
			\cC_{12}&:=\{\ket{d_A-1}_A\ket{\eta_i}_B\ket{0}_C\ket{0}_D\ket{\xi_j}_E\mid (i,j)\in \bbZ_{d_B-1} \times\bbZ_{d_E-1}\},\\
			\cC_{13}&:=\{\ket{d_A-1}_A\ket{\eta_i}_B\ket{0}_C\ket{\xi_j}_D\ket{d_E-1}_E\mid (i,j)\in \bbZ_{d_B-1} \times\bbZ_{d_D-1}\},\\
			\cC_{14}&:=\{\ket{d_A-1}_A\ket{d_B-1}_B\ket{\eta_i}_C\ket{\xi_j}_D\ket{d_E-1}_E\mid (i,j)\in \bbZ_{d_C-1} \times\bbZ_{d_D-1}\},\\
			\cC_{15}&:=\{\ket{d_A-1}_A\ket{d_B-1}_B\ket{\eta_i}_C\ket{0}_D\ket{\xi_j}_E\mid (i,j)\in \bbZ_{d_C-1} \times\bbZ_{d_E-1}\},\\
			\cC_{16}&:=\{\ket{d_A-1}_A\ket{d_B-1}_B\ket{d_C-1}_C\ket{\eta_i}_D\ket{\xi_j}_E\mid (i,j)\in \bbZ_{d_D-1} \times\bbZ_{d_E-1}\},\\
			\cD_1&:=\{\ket{\eta_i}_A\ket{0}_B\ket{\xi_j}_C\ket{\eta_k}_D\ket{\xi_\ell}_E\mid (i,j,k,\ell)\in\bbZ_{d_A-1}\times \bbZ_{d_C-1} \times\bbZ_{d_D-1} \times\bbZ_{d_E-1} \},\\	\cD_2&:=\{\ket{\eta_i}_A\ket{\xi_j}_B\ket{d_C-1}_C\ket{\eta_k}_D\ket{\xi_\ell}_E\mid (i,j,k,\ell)\in\bbZ_{d_A-1}\times \bbZ_{d_B-1} \times\bbZ_{d_D-1} \times\bbZ_{d_E-1} \},\\						
			\cD_3&:=\{\ket{\eta_i}_A\ket{\xi_j}_B\ket{\eta_k}_C\ket{0}_D\ket{\xi_\ell}_E\mid (i,j,k,\ell)\in\bbZ_{d_A-1}\times \bbZ_{d_B-1} \times\bbZ_{d_C-1} \times\bbZ_{d_E-1} \},\\
			\cD_4&:=\{\ket{\eta_i}_A\ket{\xi_j}_B\ket{\eta_k}_C\ket{\xi_\ell}_D\ket{d_E-1}_E\mid (i,j,k,\ell)\in\bbZ_{d_A-1}\times \bbZ_{d_B-1} \times\bbZ_{d_C-1} \times\bbZ_{d_D-1} \},\\
			\cD_5&:=\{\ket{\eta_i}_A\ket{0}_B\ket{0}_C\ket{\xi_j}_D\ket{d_E-1}_E\mid (i,j)\in\bbZ_{d_A-1}\times \bbZ_{d_D-1} \},\\
			\cD_6&:=\{\ket{\eta_i}_A\ket{0}_B\ket{\xi_j}_C\ket{d_D-1}_D\ket{d_E-1}_E\mid (i,j)\in\bbZ_{d_A-1}\times \bbZ_{d_C-1} \},\\
			\cD_7&:=\{\ket{\eta_i}_A\ket{0}_B\ket{0}_C\ket{0}_D\ket{\xi_j}_E\mid (i,j)\in\bbZ_{d_A-1}\times \bbZ_{d_E-1} \},\\
			\cD_8&:=\{\ket{\eta_i}_A\ket{\xi_j}_B\ket{d_C-1}_C\ket{d_D-1}_D\ket{d_E-1}_E\mid (i,j)\in\bbZ_{d_A-1}\times \bbZ_{d_B-1} \},\\
			\cD_9&:=\{\ket{0}_A\ket{\xi_i}_B\ket{\eta_j}_C\ket{\xi_k}_D\ket{\eta_\ell}_E\mid (i,j,k,\ell)\in \bbZ_{d_B-1} \times\bbZ_{d_C-1} \times\bbZ_{d_D-1}\times \bbZ_{d_E-1}\},\\
			\cD_{10}&:=\{\ket{0}_A\ket{0}_B\ket{0}_C\ket{0}_D\ket{0}_E\},\\
			\cD_{11}&:=\{\ket{0}_A\ket{\xi_i}_B\ket{\eta_j}_C\ket{0}_D\ket{0}_E\mid (i,j)\in \bbZ_{d_B-1} \times\bbZ_{d_C-1}\},\\
			\cD_{12}&:=\{\ket{0}_A\ket{\xi_i}_B\ket{d_C-1}_C\ket{d_D-1}_D\ket{\eta_j}_E\mid (i,j)\in \bbZ_{d_B-1} \times\bbZ_{d_E-1}\},\\
			\cD_{13}&:=\{\ket{0}_A\ket{\xi_i}_B\ket{d_C-1}_C\ket{\eta_j}_D\ket{0}_E\mid (i,j)\in \bbZ_{d_B-1} \times\bbZ_{d_D-1}\},\\
			\cD_{14}&:=\{\ket{0}_A\ket{0}_B\ket{\xi_i}_C\ket{\eta_j}_D\ket{0}_E\mid (i,j)\in \bbZ_{d_C-1} \times\bbZ_{d_D-1}\},\\
			\cD_{15}&:=\{\ket{0}_A\ket{0}_B\ket{\xi_i}_C\ket{d_D-1}_D\ket{\eta_j}_E\mid (i,j)\in \bbZ_{d_C-1} \times\bbZ_{d_E-1}\},\\
			\cD_{16}&:=\{\ket{0}_A\ket{0}_B\ket{0}_C\ket{\xi_i}_D\ket{\eta_j}_E\mid (i,j)\in \bbZ_{d_D-1} \times\bbZ_{d_E-1}\},\\
		\end{aligned}
	\end{equation}
	where $\ket{\eta_s}_X=\sum_{t=0}^{d_X-2}w_{d_X-1}^{st}\ket{t}_X$, and $\ket{\xi_s}_X=\sum_{t=0}^{d_X-2}w_{d_X-1}^{st}\ket{t+1}_X$ for $s\in \bbZ_{d_X-1 }$, and $X\in\{A,B,C,D,E\}$, Then we have the following theorem.

	\begin{theorem}\label{OPS_SN_five_parties_general}
		In $d_A\otimes d_B\otimes d_C\otimes d_D\otimes d_E$, $d_A,d_B,d_C,d_D,d_E\geq 3$, the set $\cup_{i=1}^{16}(\cC_i,\cD_i)$ given by Eq. \eqref{OPB_five_general}  is an OPS of  the strongest nonlocality. The size of this set is $d_Ad_Bd_Cd_Dd_E-(d_A-2)(d_B-2)(d_C-2)(d_D-2)(d_E-2)$.
	\end{theorem}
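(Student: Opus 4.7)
The plan is to rerun the four-step template used in Theorem~\ref{OPS_SN_four_parties_general}, adapted to the $32$-block decomposition of the outermost layer of the $5$-dimensional hypercube. Eq.~\eqref{OPB_five_general} is built to be invariant under the cyclic shift $A\mapsto B\mapsto C\mapsto D\mapsto E\mapsto A$ (together with the obvious relabelling of the $\cC_i,\cD_i$), so it suffices to establish the conclusion in the bipartition $A|BCDE$; the other four then follow by re-labelling. The single-bipartition task is to show that every orthogonality-preserving POVM element $E=M^{\dagger}M$ performed jointly by $BCDE$ on $\cup_{i=1}^{16}(\cC_i,\cD_i)$ is proportional to $\bbI$.

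The first step is block diagonalisation. The sets $\cC_1,\ldots,\cC_8$ share the $A$-component $\ket{\xi_i}_A$ and $\cC_9,\ldots,\cC_{16}$ share $\ket{d_A-1}_A$; every pair drawn from $\{\ket{\xi_0}_A,\ket{d_A-1}_A\}$ has non-vanishing inner product. Applying Lemma~\ref{lem:zero} pairwise to the sixteen fibres $\{\cC_j(\ket{\xi_0}_A)\}_{j=1}^{8}\cup\{\cC_j(\ket{d_A-1}_A)\}_{j=9}^{16}$ therefore yields the direct analogue of Eq.~\eqref{eq:E_8diagonal},
\begin{equation}
E = \bigoplus_{j=1}^{16} E_{\cC_j^{(A)}}.
\end{equation}

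Next comes the seeding and propagation of triviality. Unlike the four-partite case, no single $\cC_k^{(A)}$ is simultaneously a singleton and contained in the ``central'' set $\cD_9^{(A)}$, so I would use a chained invocation of Lemma~\ref{lem:trivial} rather than a single shot. The natural seed is the singleton $\cD_{10}^{(A)}=\{\ket{0}_B\ket{0}_C\ket{0}_D\ket{0}_E\}$, which sits inside $\cC_8^{(A)}=\{0,\ldots,d_B-2\}\times\{0\}\times\{0\}\times\{0\}$. Applying Lemma~\ref{lem:zero} to $\cD_{10}(\ket{0}_A)$ paired with $\cD_{11}(\ket{0}_A)$ zeroes the row of $E$ at $(0,0,0,0)$ on $\cD_{11}^{(A)}$, and hence on $\cC_8^{(A)}\setminus\{(0,0,0,0)\}\subseteq\cD_{11}^{(A)}$; Lemma~\ref{lem:trivial} applied to $\cC_8(\ket{\xi_0}_A)$ with seed $(0,0,0,0)$ then gives $E_{\cC_8^{(A)}}=a\,\bbI_{\cC_8^{(A)}}$. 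From there I would propagate the scalar $a$ along a spanning chain in the intersection graph whose vertices are the $32$ projection sets and whose edges are their non-empty pairwise intersections: at each step an already-trivialised block supplies the zeroing needed to invoke Lemma~\ref{lem:trivial} on the next orthogonal set, eventually trivialising $\cD_9^{(A)}$ and every $\cC_j^{(A)}$ so that $E=a\,\bbI$. The cyclic symmetry of Eq.~\eqref{OPB_five_general}, combined with the reflection $\ket{0}_X\leftrightarrow\ket{d_X-1}_X$ that exchanges $\cC_i\leftrightarrow\cD_i$, upgrades the single-bipartition conclusion to strongest nonlocality across all five bipartitions.

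The main obstacle is not conceptual but combinatorial: one must verify that the pairwise intersections of the $32$ projection sets form a connected graph supporting a spanning chain of Lemma~\ref{lem:trivial} applications, that the non-vanishing inner-product condition of Lemma~\ref{lem:trivial} holds with the specific seed chosen at each step, and that the scalar $a$ propagated along different chains agrees wherever they meet. The last point is automatic once every fundamental cycle of the intersection graph contains at least one common overlapping block; the substantive work is to write out a $16\times 16$ analogue of Eq.~\eqref{eq:relation_set_dddd} and, for each link of the spanning chain, to select a seed vector with non-zero overlap against every member of the corresponding orthogonal fibre. Once that bookkeeping is laid out, the four-step template transcribes almost mechanically.
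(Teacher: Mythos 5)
Your proposal is correct and follows essentially the same route as the paper: the same block diagonalisation via Lemma~\ref{lem:zero} on the sixteen fibres, the same seed (the singleton $\cD_{10}^{(A)}$ sitting inside $\cC_8^{(A)}$ with $\cC_8^{(A)}\setminus\cD_{10}^{(A)}\subset\cD_{11}^{(A)}$), the same chained propagation of the scalar through overlapping projection sets, and the same symmetry argument for the remaining bipartitions. The only difference is that the paper writes out the explicit chain ($\cC_8\to\cD_{11}\to\cC_4\to\cD_9,\cD_{13}\to\cC_2,\cC_3\to\cD_{12}$, whose union with symmetry covers everything), whereas you leave that bookkeeping as a verification task.
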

	
		\begin{figure}[h]
		\centering
		\includegraphics[scale=0.5]{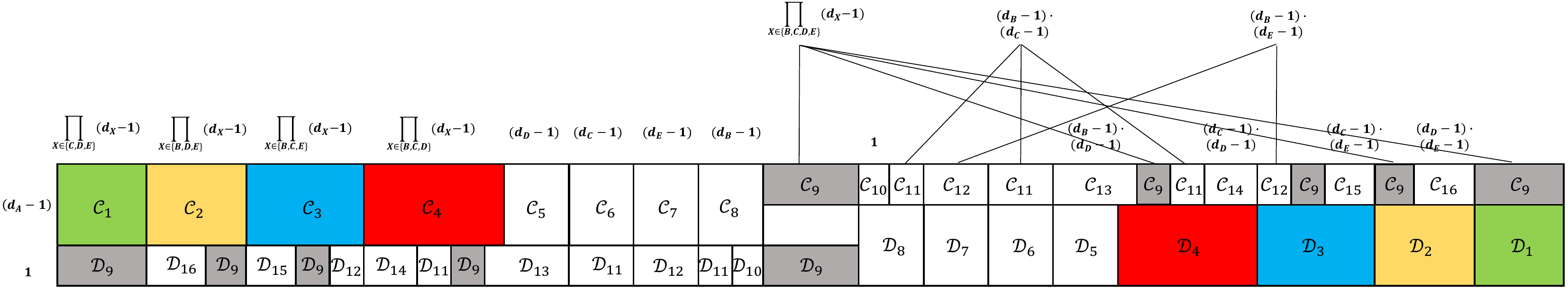}
		\caption{The corresponding  $d_A\times d_Bd_Cd_Dd_E$ grid of $\cup_{i=1}^{16}\{\cC_i,\cD_i\}$ given by Eq.~\eqref{OPB_five_general}  in $A|BCDE$ bipartition.  Moreover,  $\cC_i$ is symmetrical to $\cD_i$ for $1\leq i\leq 16$. }\label{fig:Five_OPS_dAdBdCdDdE}
	\end{figure}

	\begin{proof}
		The 32 subsets $\cup_{i=1}^{16}\{\cC_i,\cD_i\}$ in $A|BCDE$ bipartition correspond to 32 blocks of the $d_A\times d_Bd_Cd_Dd_E$ grid in Fig.~\ref{fig:Five_OPS_dAdBdCdDdE}. Note that
		\begin{equation}\label{eq:relation_set_ddddd}
			\begin{aligned}
				&\cD_{9}^{(A)}\cap\cC_j^{(A)}\neq \emptyset \quad \text{for} \ j=1,2,3,4,9;\\ &\cD_{10}^{(A)}\cap\cC_8^{(A)}\neq \emptyset;\\ &\cD_{11}^{(A)}\cap\cC_j^{(A)}\neq \emptyset \quad \text{for} \ j=4,6,8;\\ &\cD_{12}^{(A)}\cap\cC_j^{(A)}\neq \emptyset \quad \text{for}\ j=3,7;\\
				&\cD_{13}^{(A)}\cap\cC_j^{(A)}\neq \emptyset \quad \text{for}\ j=4,5;\\
				&\cD_{14}^{(A)}\cap\cC_4^{(A)}\neq \emptyset;\\
				&\cD_{15}^{(A)}\cap\cC_3^{(A)}\neq \emptyset;\\
				&\cD_{16}^{(A)}\cap\cC_2^{(A)}\neq \emptyset.
			\end{aligned}
		\end{equation}
		Since  the 32 subsets $\cup_{i=1}^{16}\{\cC_i,\cD_i\}$ in any bipartition of $\{A|BCDE,E|ABCD, D|EABC, C|DEAB, B|CDEA\}$ correspond to a similar grid as Fig.~\ref{fig:Five_OPS_dAdBdCdDdE}, we only need to consider Fig.~\ref{fig:Five_OPS_dAdBdCdDdE}.  Let $B$, $C$, $D$  and $E$ come together to  perform a joint  orthogonality-preserving POVM $\{E=M^{\dagger}M\}$, where $E=(a_{ijk\ell, mnst})_{i,m\in \bbZ_{d_B}, j,n\in \bbZ_{d_C}, k,s\in \bbZ_{d_D},\ell,t\in \bbZ_{d_E}}$. Then the postmeasurement states $\{\mathbb{I}_A\otimes M\ket{\psi}\big |\ket{\psi}\in \cup_{i=1}^{16}(\cC_i,\cD_i)\}$ should be mutually orthogonal.
		
		\noindent {\bf Step 1}	 Applying Lemma~\ref{lem:zero}  to any two elements of $\{\{\cC_j(\ket{\xi_0}_A)\}_{j=1}^8,\{\cC_j(\ket{d_A-1}_A)\}_{j=9}^{16}\}$,  we have
		\begin{equation}\label{eq:diagonal1}
			E= \oplus_{j=1}^{16}E_{\cC_j^{(A)}}.
		\end{equation}	
		
		\noindent {\bf Step 2}  Applying Lemma~\ref{lem:zero} to any two elements of $\{\cD_i(\ket{0}_A)\}_{i=9}^{16}$, we have
		\begin{equation}\label{eq:cD_icD_j}
			{}_{\cD_{i_1}^{(A)}}E_{\cD_{i_2}^{(A)}}=\textbf{0} \quad \text{for} \ 9\leq i_1\neq i_2\leq 16.
		\end{equation}	
		
		Next, since $\cC_{8}^{(A)}\setminus\cD_{10}^{(A)}\subset \cD_{11}^{(A)}$, ${}_{\cD_{10}^{(A)}}E_{\cC_{8}^{(A)}\setminus\cD_{10}^{(A)}}=\textbf{0}$. Note that $|\cD_{10}^{(A)}|=1$, more exactly, $\cD_{10}^{(A)} =\{\ket{0}_B\ket{0}_C\ket{0}_D\ket{0}_E\}$.  Applying Lemma~\ref{lem:trivial} to $\cC_8({\ket{\xi_0}_A})$, we obtain
		\begin{equation}\label{eq:cC_8trivial}
			E_{\cC_8^{(A)}}=a\bbI_{\cC_8^{(A)}}.
		\end{equation}

		\noindent {\bf Step 3}
		By Eqs.~\eqref{eq:diagonal1} and \eqref{eq:cC_8trivial}, we know that for any $\ket{j}_B\ket{k}_C\ket{\ell}_D\ket{m}_E\in \cD_{11}^{(A)}\cap\cC_{8}^{(A)}$, ${}_{\{\ket{j}_B\ket{k}_C\ket{\ell}_D\ket{m}_E\}}E_{\cD_{11}^{(A)}\setminus\{\ket{j}_B\ket{k}_C\ket{\ell}_D\ket{m}_E\}}=\textbf{0}$.
		Applying Lemma~\ref{lem:trivial} to $\cD_{11}({\ket{0}_A})$, we have
		\begin{equation}\label{eq:cC_11trivial}
			E_{\cD_{11}^{(A)}}=a\bbI_{\cD_{11}^{(A)}}.
		\end{equation}
		
		Next, by Eqs.~\eqref{eq:cD_icD_j} and \eqref{eq:cC_11trivial}, we know that for any $\ket{j}_B\ket{k}_C\ket{\ell}_D\ket{m}_E\in \cD_{11}^{(A)}\cap\cC_{4}^{(A)}$,
		${}_{\{\ket{j}_B\ket{k}_C\ket{\ell}_D\ket{m}_E\}}E_{\cC_{4}^{(A)}\setminus\{\ket{j}_B\ket{k}_C\ket{\ell}_D\ket{m}_E\}}=\textbf{0}$.
		Applying Lemma~\ref{lem:trivial} to $\cC_{4}({\ket{\xi_0}_A})$, we have
		\begin{equation}\label{eq:cC_4trivial}
			E_{\cC_{4}^{(A)}}=a\bbI_{\cC_{4}^{(A)}}.
		\end{equation}
		
		\noindent {\bf Step 4}
		By Eqs.~\eqref{eq:diagonal1} and \eqref{eq:cC_4trivial}, we know that for any $\ket{j}_B\ket{k}_C\ket{\ell}_D\ket{m}_E\in \cD_{i}^{(A)}\cap\cC_{4}^{(A)}$ for $i=9,13$,
		${}_{\{\ket{j}_B\ket{k}_C\ket{\ell}_D\ket{m}_E\}}E_{\cD_{i}^{(A)}\setminus\{\ket{j}_B\ket{k}_C\ket{\ell}_D\ket{m}_E\}}=\textbf{0}$.
		Applying Lemma~\ref{lem:trivial} to $\cD_{i}({\ket{0}_A})$ for $i=9,13$, we have
		\begin{equation}\label{eq:cC_9,13trivial}
			E_{\cD_{i}^{(A)}}=a\bbI_{\cD_{i}^{(A)}} \quad \text{for} \ i=9,13.
		\end{equation}
		
		Next, By Eqs.~\eqref{eq:cD_icD_j} and \eqref{eq:cC_9,13trivial}, we know that for any $\ket{j}_B\ket{k}_C\ket{\ell}_D\ket{m}_E\in \cD_{9}^{(A)}\cap\cC_{i}^{(A)}$ for $i=2,3$,
		${}_{\{\ket{j}_B\ket{k}_C\ket{\ell}_D\ket{m}_E\}}E_{\cC_{i}^{(A)}\setminus\{\ket{j}_B\ket{k}_C\ket{\ell}_D\ket{m}_E\}}=\textbf{0}$.
		Applying Lemma~\ref{lem:trivial} to $\cC_{i}({\ket{\xi_0}_A})$ for $i=2,3$, we have
		\begin{equation}\label{eq:cC_2,3trivial}
			E_{\cC_{i}^{(A)}}=a\bbI_{\cC_{i}^{(A)}} \quad \text{for} \ i=2,3.
		\end{equation}
		
		\noindent {\bf Step 5}
		By Eqs.~\eqref{eq:diagonal1} and \eqref{eq:cC_2,3trivial}, we know that for any $\ket{j}_B\ket{k}_C\ket{\ell}_D\ket{m}_E\in \cD_{12}^{(A)}\cap\cC_{3}^{(A)}$,
		${}_{\{\ket{j}_B\ket{k}_C\ket{\ell}_D\ket{m}_E\}}E_{\cD_{12}^{(A)}\setminus\{\ket{j}_B\ket{k}_C\ket{\ell}_D\ket{m}_E\}}=\textbf{0}$.
		Applying Lemma~\ref{lem:trivial} to $\cD_{12}({\ket{0}_A})$, we have
		\begin{equation}\label{eq:cC_1trivial}
			E_{\cD_{12}^{(A)}}=a\bbI_{\cD_{12}^{(A)}}.
		\end{equation}
		
		Thus, above all,
		\begin{equation}
			E_{\{\cup_{i=1}^{8}\cC_i^{(A)}\}\cup\cD_9^{(A)}}=\bbI_{\{\cup_{i=1}^{8}\cC_i^{(A)}\}\cup\cD_9^{(A)}}
		\end{equation}
		as 	$\{\cup_{i=1}^{8}\cC_i^{(A)}\}\cup\cD_9^{(A)}$ is equal to $ \cC_8^{(A)} \cup \cD_{11}^{(A)} \cup  \cC_{4}^{(A)} \cup \cD_{9}^{(A)} \cup \cD_{13}^{(A)} \cup \cC_{2}^{(A)} \cup\cC_{3}^{(A)} \cup\cD_{12}^{(A)}.$
		
		\noindent {\bf Step 6} By the symmetry of Fig.~\ref{fig:Five_OPS_dAdBdCdDdE}, we can obtain that $E=a\mathbb{I}$. Thus $E$ is trivial.	
	\end{proof}
	\vskip 8pt
	
	Any of our OPS in three, four and five-partite systems can be extended to a complete orthogonal product basis (COPB). We only need to add the product states which lie in the inside layer of the hypercube. Thus, our results solve an open question asked by the authors in Refs.~\cite{Halder2019Strong, yuan2020strong}. That is, strongly nonlocal OPSs and COPBs do exist in all possible three, four and five-partite systems.  However, we are unable to generalize this structure to any $n$-partite system for $n>5$. There are several difficulties. First, we cannot obtain a general expression of the OPS from the outermost layer
of an $n$-dimensional hypercube,  from which the OPS has a similiar structure in the bipartitions $\{A_1|A_2A_3\cdots A_n, A_2|A_3\cdots A_n A_1,  \ldots,  A_n|A_1A_2\cdots A_{n-1} \}$.  Second, all of our proofs for the strongest nonlocality of the OPSs are based on the grid representations of the OPSs in  the biparition $A_1|A_2A_3\cdots A_n$, like Figs.~\ref{fig:Tri_OPS_dAdBdC}, \ref{fig:Four_OPS_dAdBdCdD} and \ref{fig:Five_OPS_dAdBdCdDdE}, and we do not know the grid representations in the biparition $A_1|A_2A_3\cdots A_n$ when $n> 5$. Third, we may take more steps to arrive at the statement $E\propto \mathbb{I}$. Therefore, it's more difficult to constructing OPSs of strong nonlocality for quantum system with more parties. However, we believe that our method can be generalized to $n$-partite systems for $n>5$.

\section{Conclusion and Discussion}
 We have constructed strongly nonlocal OPSs in three, four and five-partite systems.  This result has answered an open question in Refs.~\cite{Halder2019Strong,yuan2020strong} for any possible three, four and five-partite systems. There are some interesting problems left.  How to generalize our constructions to $n$-partite systems for $n>5$? Can we construct UPBs from our structures?
	
	One can use entanglement as a resource to finish the  locally distinguished protocol  when the given set is locally indistinguishable. Such protocol is called the entanglement-assisted discrimination~\cite{ghosh2001distinguishability,cohen2008understanding,bandyopadhyay2016entanglement,zhang2016entanglement,gungor2016entanglement,zhang2018local,Sumit2019Genuinely,zhang2020locally,Shi2020Unextendible,2020Strong}.   It is also to quantify how many entanglement resource are needed to local distinguish the strongly nonlocal set constructed here in all bipartitions.
	
\section*{Acknowledgments}
\label{sec:ack}	
	FS and XZ were supported by NSFC under Grant No. 11771419,  the Fundamental Research Funds for the Central Universities, and Anhui Initiative in Quantum Information Technologies under Grant No. AHY150200.  MSL and MHY were supported  by  National  Natural  Science  Foundation  of  China  (12005092, 11875160,   and  U1801661), the China Postdoctoral Science Foundation (2020M681996), the  Natural  Science  Foundation  of  Guang-dong  Province  (2017B030308003),   the  Key  R$\&$D  Program of   Guangdong   province   (2018B030326001),   the   Guang-dong    Innovative    and    Entrepreneurial    Research    TeamProgram (2016ZT06D348), the Science, Technology and   Innovation   Commission   of   Shenzhen   Municipality (JCYJ20170412152620376   and   JCYJ20170817105046702 and  KYTDPT20181011104202253),   the Economy, Trade  and  Information  Commission  of  Shenzhen Municipality (201901161512).   YLW is supported by   the NSFC under Grant No. 11901084 and the Research startup funds of DGUT under Grant No. GC300501-103.  LC and MH were supported by the  NNSF of China (Grant No. 11871089), and the Fundamental Research Funds for the Central Universities (Grant No. ZG216S2005).


	







	\bibliographystyle{IEEEtran}
	\bibliography{reference}	
\end{document}